\documentclass[a4paper,fleqn]{cas-sc}

\usepackage[authoryear,round]{natbib}
\usepackage[english]{babel}
\usepackage{amsthm}
\usepackage{mathrsfs}
\usepackage{bm}
\usepackage{mathtools}
\usepackage{cases}
\usepackage{float}
\usepackage{algorithm}
\usepackage{algpseudocode}
\usepackage{enumerate}
\usepackage{enumitem}
\usepackage{multirow}
\usepackage{longtable}
\usepackage{threeparttable}
\usepackage{lscape}
\usepackage{rotating}
\usepackage{framed}

\hypersetup{colorlinks=true,citecolor=blue,linkcolor=blue,urlcolor=blue}

\newtheorem{theorem}{Theorem}[section]

\newtheorem{remark}{Remark}

\newcommand{\bX}{\mathbf{X}}
\newcommand{\bmu}{\boldsymbol{\mu}}
\newcommand{\bpsi}{\boldsymbol{\psi}}

\newcommand{\inner}[2]{\langle\langle #1, #2 \rangle\rangle_w}
\newcommand{\norm}[1]{|\!|\!| #1 |\!|\!|}

\begin{document}
\let\WriteBookmarks\relax
\def\floatpagepagefraction{1}
\def\textpagefraction{.001}

\shorttitle{Projection-Based Outlier Detection in Interval--Valued Functional Data}
\shortauthors{H. Xu et~al.}

\title [mode = title]{Projection-Based Outlier Detection in Interval-Valued Functional Data}

\author[1]{Hao Xu}
\ead{haoxu124@buaa.edu.cn}
\author[3, 4]{Wan Tian}
\ead{wantian61@foxmail.com}
\author[1,2]{Zhongfeng Qin}
\cormark[1]
\ead{qin@buaa.edu.cn}

\affiliation[1]{organization={School of Economics and Management, Beihang University},
                city={Beijing},
                postcode={100191},
                country={China}}
\affiliation[2]{organization={Key Laboratory of Complex System Analysis, Management and Decision (Beihang University)},
                city={Beijing},
                postcode={100191},
                country={China}}
\affiliation[3]{Advanced Institute of Information Technology, Peking University}    

\affiliation[4]{organization={Wangxuan Institute of Computer Technology, Peking University},
                city={Beijing},
                postcode={100871},
                country={China}}  

\cortext[cor1]{Corresponding author}

\begin{abstract}
Outlier detection is a fundamental task for ensuring reliable statistical
modeling and inference. Interval-valued functional data (IVFD), in which each
observation is represented by an interval-valued curve that preserves the
variability and uncertainty within the observation, have attracted increasing
attention in statistics and related applications. Developing effective outlier detection procedures for IVFD is therefore an important methodological problem. To address this issue, we develop a robust projection-based
outlier detection framework. We first represent each interval-valued
functional observation through its center and log-radius functions and apply
interval-valued functional principal component analysis (IFPCA) to obtain a
joint low-dimensional representation. We then introduce the
interval-valued least trimmed functional scores (ILTFS) method, which
identifies a robust reference subset by minimizing a trimmed aggregate of
standardized IFPCA score distances. Finally, we proposed the ILTFS-FDR outlier detection procedure by converting the resulting projection distances into empirical $p$-values and  adjusting using the Benjamini--Hochberg
procedure at a prespecified target false discovery rate level. Theoretically,
we derive the finite-sample  breakdown point of the ILTFS mean
estimator and establish the descent property of the concentration-step
algorithm. Simulation studies and an empirical application to high-frequency
ETF data demonstrate the effectiveness and robustness of the proposed ILTFS-FDR procedure in detecting abnormal interval-valued
functional observations.
\end{abstract}

\begin{keywords}
Interval-valued functional data \sep Projection-based outlier detection \sep  Robust estimation \sep Breakdown point \sep False discovery rate
\end{keywords}

\maketitle
\section{Introduction}

Over the past decade, the modeling and analysis of interval-valued data have attracted growing attention in statistics, econometrics, and machine learning \citep{han2012autoregressive,gonzalez2013constrained,sun2022model,tian2025block}.
Interval-valued data consist of observations represented by lower and upper bounds, allowing each observation to capture both its overall level and the variation within the corresponding interval. In practice, such data can be generated in several ways. First, some variables are naturally recorded in terms of their lower and upper values over a given period, as illustrated by the minimum and maximum temperatures \citep{sun2022model}.
Second, measurement error, limited observational precision, or inherent uncertainty may prevent the true value from being observed exactly and instead confine it to an interval \citep{sun2014random}. Third, individual-level point-valued observations may be aggregated across time, space, or groups to reduce storage and computational costs \citep{billard2003statistics}. Finally, sensitive attributes, such as income or age, may be recorded as intervals to protect individual privacy \citep{kang2024hausdorff}.
Compared with classical point-valued data, interval-valued observations offer two main advantages. First, they reduce the information loss caused by representing a complex observation with a single value, thereby providing a richer basis for efficient estimation and statistical inference \citep{gonzalez2013constrained,sun2022model}. Second, interval representations can accommodate part of the variation induced by measurement error, thereby reducing the influence of isolated deviant measurements on model estimates and facilitating more robust inference. When interval-valued observations are densely recorded over a continuous domain, each observation forms a complete interval-valued curve, giving rise to interval-valued functional data (IVFD). Accordingly, the existing literature can be broadly divided into two strands: the statistical analysis of general interval-valued data and the modeling and analysis of IVFD.

First, research on interval-valued data already covers a wide range of topics.
In regression analysis, early foundational models have been extended to constrained, nonlinear, linear, and robust formulations \citep{billard2000regression,limaneto2017nonlinear,sun2016linear,zhao2023robust}.
For temporally ordered interval observations, forecasting and dynamic modeling of interval-valued time series have been developed, including exponential smoothing, neural-network approaches, threshold autoregressive models, and multivariate dynamic specifications \citep{maia2008forecasting,maia2011holt,sun2018threshold,lin2016interval,han2016vector}.
Exploratory and multivariate tools such as principal component analysis and visualization have likewise been adapted to interval data \citep{lerademacher2012symbolic,wang2012cipca,zhang2022visualization}.
More recent work further addresses covariance and precision-matrix estimation for interval-valued data \citep{tian2024mcd,wu2024idgm,qin2026estimating}, as well as clustering, classification, model averaging, and machine-learning methods \citep{decarvalho2009dynamic,yang2019prediction,alcacer2024ordinal}.
Overall, this strand of research has established a broad methodological foundation for the analysis of interval-valued observations. Its primary focus, however, remains on finite-dimensional interval-valued vectors or interval-valued series observed at discrete time points, whereas settings in which intervals evolve over a continuous domain require the observations to be considered as complete trajectories.

The second strand focuses on interval-valued functional data, for which interval-valued observations are modeled as functions defined over a continuous domain.
IVFD are thus a natural extension of general interval-valued data within the functional framework.
Existing research on the modeling and analysis of IVFD has concentrated mainly on three directions.
In functional principal component analysis, methods extract the dominant modes of variation of interval-valued curves in order to reduce dimension and summarize high-dimensional functional information \citep{ikeda2010extension,sun2022time}.
In classification and clustering, the literature has studied grouping and ordinal classification for interval-valued functional observations \citep{sun2023interval,alcacer2024ordinal}.
In regression, functional linear and nonparametric procedures have been proposed to relate interval-valued functions to other variables \citep{beyaztas2022functional,nasirzadeh2021linear,nasirzadeh2022nonparametric}.
Intraday prices of financial assets provide a representative application of IVFD: by partitioning a trading day into a continuum of ordered intraday windows and forming an interval from the low and high prices within each window, each trading day can be represented as an interval-valued price curve, thereby describing the continuous joint evolution of the price level and within-window fluctuation over the session.

High-quality data are essential for reliable statistical modeling \citep{rousseeuw1987robust}, parameter estimation \citep{gervini2012outlier}, and predictive analysis \citep{hyndman2007robust}, as outlying observations can substantially distort both the analytical procedures and their results. Identifying and handling outliers before formal modeling is therefore a crucial step in maintaining data quality and ensuring the validity of subsequent analyses.
For classical functional data, outlier detection has been studied extensively, two prominent families that are particularly relevant to the present work are depth-based and projection-based methods. Depth-based methods identify outlying curves by measuring the centrality or outlyingness of functional observations, considering features such as overall shape, departure direction, and local variation \citep{lopez2009concept,sun2011functional,dai2020functional,jimenezvaron2024pointwise}. Projection-based methods reduce functional observations to finite-dimensional representations, typically through functional principal component analysis, and detect outliers in the resulting score space using robust or multivariate criteria \citep{ren2017projection}. By utilizing information from the full functional data, both classes of methods can identify anomalies in functional level and shape.
Although outlier detection methods for point-valued functional data are relatively well developed, methods specifically designed for interval-valued functional data remain limited.

Motivated by this gap, this paper develops a robust projection-based outlier detection framework for interval-valued functional data (IVFD). Each interval-valued functional observation is represented by its center and log-radius functions, and interval-valued functional principal component analysis (IFPCA) is formulated under a weighted product-space framework to balance the scales of the two components while preserving their dependence. Because the empirical mean and projection structure estimated from the full sample may be distorted by contamination, we introduce the interval-valued least trimmed functional scores (ILTFS) method, which identifies a robust reference subset by minimizing a trimmed aggregate of standardized IFPCA score distances. Based on the ILTFS subset, the projection structure is re-estimated and the subset is further refined. The resulting projection distances are then converted into empirical $p$-values and adjusted using the Benjamini-–Hochberg procedure at a prespecified target false discovery rate, yielding the complete ILTFS-FDR outlier detection procedure. We establish the finite-sample replacement breakdown point of the ILTFS mean estimator and the descent property and convergence of its associated concentration-step algorithm. Simulation studies evaluate the detection power and error-control performance of ILTFS-FDR under different dependence structures, contamination schemes, sample sizes, and grid resolutions, while an empirical application to intraday high--low price curves of exchange-traded funds demonstrates its practical effectiveness for detecting abnormal interval-valued functional observations.

The structure of this paper is organized as follows: Section \ref{sec2} introduces the proposed outlier detection method for interval-valued functional data. Section \ref{sec3} discusses the theoretical properties of the estimator. Section \ref{sec4} evaluates the finite-sample performance through simulation studies. Section \ref{sec5} applies the proposed method to real-world applications in high-frequency ETF data demonstrate the effectiveness and robustness of the proposed method. Section \ref{sec6} concludes the paper. All theoretical proofs are provided in the Appendix \ref{appendixA}.

\section{Methodology} \label{sec2}
This section develops a robust projection-based procedure for outlier detection in interval-valued functional data.
Suppose that we observe $N$ interval-valued functional observations on a compact intraday time domain $\mathcal{T}$,
\[
\mathcal{X}_N=\left\{X_i(t)=\big[X_i^{l}(t),X_i^{u}(t)\big]: i=1,\ldots,N,\ t\in\mathcal{T}\right\},
\]
where $X_i^{l}(t)$ and $X_i^{u}(t)$ denote the lower and upper bound functions of the $i$th observation, respectively, with $X_i^{l}(t)\leq X_i^{u}(t)$ for all $t\in\mathcal{T}$. Following \cite{brito2012modelling} and \cite{zhao2023robust}, each interval-valued curve is transformed into its center and log-radius functions,
\[
X_i^{c}(t)=\frac{X_i^{u}(t)+X_i^{l}(t)}{2},\qquad
X_i^{r}(t)=\log\! \left(\frac{X_i^{u}(t)-X_i^{l}(t)}{2}\right),
\]
and is represented as a bivariate functional observation
$
\mathbf{X}_i(t)=\big(X_i^{c}(t),X_i^{r}(t)\big)^\top,\: t\in\mathcal{T}.
$
We model the vector-valued functional observations as
\begin{equation}
\mathbf{X}_i(t)
=
\begin{cases}
\boldsymbol{\mu}_i(t) + \boldsymbol{\varepsilon}_i(t),
& i \in \mathcal{O}_N, \\[2mm]
\boldsymbol{\mu}_0(t) + \boldsymbol{\varepsilon}_i(t),
& i \notin \mathcal{O}_N,
\end{cases}
\label{eq:outlier_model}
\end{equation}
where $\mathcal{O}_N$ denotes the index set of outlying samples,
$\boldsymbol{\mu}_0(t)$ represents the mean vector function of the normal
samples, where
$\boldsymbol{\mu}_0(t)
=
\bigl(\mu_0^c(t),\mu_0^r(t)\bigr)^{\top}$
characterizes the mean function, capturing both the central tendency and the
variability structure. For an outlying sample $i \in \mathcal{O}_N$, the mean
function is denoted by
$\boldsymbol{\mu}_i(t)
=
\bigl(\mu_i^c(t),\mu_i^r(t)\bigr)^{\top}$,
which deviates from $\boldsymbol{\mu}_0(t)$ in either component. The term
$\boldsymbol{\varepsilon}_i(t)
=
\bigl(\varepsilon_i^c(t),\varepsilon_i^r(t)\bigr)^{\top}$
represents a bivariate stochastic error process with
$\mathbb{E}\!\left[\boldsymbol{\varepsilon}_i(t)\right]=\mathbf{0}$.
An outlying observation may deviate from the normal pattern through the center function, the log-radius function, or their joint variation.

Direct outlier detection on interval-valued functional observations is challenging because each observation is inherently high-dimensional and contains information from both its center and  range throughout the analysis. We therefore adopt a projection-based strategy that represents the dominant variation of regular interval-valued curves in a low-dimensional subspace and identifies outliers according to their departures from this subspace. This strategy requires a dimension-reduction method capable of accommodating interval-valued functional observations. Although PCA and FPCA are widely used to extract dominant modes of variation from multivariate and functional data, its standard formulation is designed for point-valued functions and does not directly account for the joint variation and scale differences between the center and log-radius components.

Accordingly, the proposed methodology proceeds in three stages. First, we construct an IFPCA representation for the bivariate functions $\mathbf{X}_i(t)$. The weighted product-space formulation balances the scales of the center and log-radius components while preserving their dependence, thereby providing a joint low-dimensional representation of the interval-valued functional observations. Second, because the empirical mean and projection structure estimated from the full sample may be distorted by abnormal curves, we introduce the ILTFS method to identify a robust reference subset by minimizing a trimmed aggregate of standardized IFPCA score distances. Third, starting from the ILTFS subset, we re-estimate the IFPCA projection structure and refine the reference subset using a robust threshold for the projection distances. The final projection distances are then converted into empirical $p$-values, and the Benjamini--Hochberg procedure is applied at a prespecified target false discovery rate level to determine the outlier set. The complete three-stage procedure is referred to as the ILTFS-FDR outlier detection procedure.

\subsection{Interval-valued functional principal component analysis}
\label{sec2.1}
Based on the bivariate representation above, we regard the bivariate functional observation $\mathbf{X}(\cdot)$ as an element of the product Hilbert space
$
\mathcal{H}=L^2(\mathcal{T})\times L^2(\mathcal{T}).
$
A key issue is that the center and log-radius components may have different scales. To balance their contributions, we define the weighted inner product on $\mathcal{H}$ as
\[
\inner{\mathbf{f}}{\mathbf{g}}
=
w_c\langle f^c,g^c\rangle_2+w_r\langle f^r,g^r\rangle_2,
\]
where $\mathbf{f}=(f^c,f^r)^\top$, $\mathbf{g}=(g^c,g^r)^\top$, $\langle u,v\rangle_2=\int_{\mathcal{T}}u(t)v(t)\,dt$ is the standard $L^2$ inner product, and $w_c>0$ and $w_r>0$ are component-specific weights. In practice, the weights can be chosen as the inverse integrated variances of the two components, so that neither the center nor the log-radius dominates the projection merely because of scale. The induced norm is denoted by $\|\mathbf{f}\|_w^2=\inner{\mathbf{f}}{\mathbf{f}}$.
To preserve the dependence between the center and log-radius functions, we model their joint covariance structure. Let $\boldsymbol{\mu}(t)=\mathbb{E}\{\mathbf{X}(t)\}$ and define the covariance kernel
\[
\mathbf{C}(s,t)
=
\mathbb{E}\left[(\mathbf{X}(s)-\boldsymbol{\mu}(s))\otimes(\mathbf{X}(t)-\boldsymbol{\mu}(t))\right]
=
\begin{pmatrix}
	C_{cc}(s,t) & C_{cr}(s,t)\\
	C_{rc}(s,t) & C_{rr}(s,t)
\end{pmatrix},
\]
where $C_{ij}(s,t)=\mathrm{Cov}\{X^i(s),X^j(t)\}$. The associated weighted covariance operator
$\Gamma_w:\mathcal{H}\to\mathcal{H}$ is defined such that, for
$\mathbf{f}\in\mathcal{H}$, its $j$th component is
\[
(\Gamma_w\mathbf{f})^{(j)}(t)
=
\sum_{i\in\{c,r\}}
w_i
\int_{\mathcal{T}}
C_{ij}(s,t)f^{(i)}(s)\,ds
=
\left\langle\!\left\langle
\mathbf{C}_{\cdot,j}(\cdot,t),\mathbf{f}
\right\rangle\!\right\rangle_w,
\qquad j\in\{c,r\},\ t\in\mathcal{T},
\]
where $\mathbf{C}_{\cdot,j}(\cdot,t)$ denotes the vector whose
$i$th component is $C_{ij}(\cdot,t)$.

Following the multivariate FPCA framework of \cite{happ2018multivariate}, under standard regularity conditions, $\Gamma_w$ is a linear, self-adjoint, and compact operator. 
 By the Spectral Theorem, there exists a countable set of eigenfunctions $\bm{\psi}_m$ and non-negative eigenvalues $v_m$ that satisfying
$
\Gamma_w\boldsymbol{\psi}_m=v_m\boldsymbol{\psi}_m,\quad
\inner{\boldsymbol{\psi}_m}{\boldsymbol{\psi}_n}=\mathbb{1}(m=n).
$
The eigenfunctions form an orthonormal basis in the weighted space, ensuring that distinct principal components are uncorrelated.
Crucially, the eigenvalues $v_m$ quantify the amount of variation explained by each principal component. Without loss of generality we assume that the eigenvalues are ordered in a strictly non-increasing sequence: $
v_1 \ge v_2 \ge \dots \ge 0.$ 
This ordering implies that the first eigenfunction $\bm{\psi}_1$ captures the largest variance in the weighted space. The sequence $v_m$ typically decays to zero, allowing for effective dimension reduction. 
This yields the weighted multivariate Karhunen--Loève expansion
\[
\mathbf{X}_i(t)
=
\boldsymbol{\mu}(t)+\sum_{m=1}^{\infty}\rho_{im}\boldsymbol{\psi}_m(t),
\]
where the principal component score of observation $i$ on the $m$th component is
$
\rho_{im}=\inner{\mathbf{X}_i-\boldsymbol{\mu}}{\boldsymbol{\psi}_m}.
$
Truncating the expansion to the first $d$ components provides a low-dimensional score representation of the original interval-valued functional observation. Compared with analyzing the center and log-radius functions separately, this joint representation retains the cross-covariance between price level and interval width. It therefore provides the projection space on which the subsequent robust outlier detection procedure is built.

In practice, functional observations are observed on a discrete grid. Directly solving the integral eigenvalue problem for $\Gamma_w$ is computationally inconvenient. We therefore use a basis expansion approach to transform the infinite-dimensional problem into a finite-dimensional matrix eigenproblem. This reduction follows the multivariate FPCA construction in \cite{happ2018multivariate} and leads to the computational procedure summarized in Algorithm \ref{alg:fast_mfpca}.

\begin{algorithm}[H]
\caption{Fast IFPCA via Univariate Basis}
\label{alg:fast_mfpca}
\begin{algorithmic}

\State \textbf{Input:} Center and radius functions
$\{X_i^c(t),X_i^r(t)\}_{i=1}^{N}$ and positive weights $w_c,w_r$.

\State \textbf{Step 1: Univariate FPCA}
\For{$j\in\{c,r\}$}
    \State \textbf{(a)} Perform standard univariate FPCA on
    $\{X_i^{(j)}(t)\}_{i=1}^{N}$ to obtain the first $M_j$
    orthonormal eigenfunctions
    $\{\widehat{\phi}_k^{(j)}(t)\}_{k=1}^{M_j}$.
    \State \textbf{(b)} Compute the unweighted scores
    $\widehat{\xi}_{i,k}^{(j)}
    =\langle X_i^{(j)},\widehat{\phi}_k^{(j)}\rangle$,
    $i=1,\ldots,N$, $k=1,\ldots,M_j$, and form
    $\widehat{\boldsymbol{\Xi}}^{(j)}
    \in\mathbb{R}^{N\times M_j}$.
\EndFor

\State \textbf{Step 2: Joint Weighted Covariance Construction}
\State \hspace{\algorithmicindent}
\textbf{(1)} Construct the joint score matrix
$\boldsymbol{\Theta}
=[\,\widehat{\boldsymbol{\Xi}}^{(c)}
\ \widehat{\boldsymbol{\Xi}}^{(r)}\,]
\in\mathbb{R}^{N\times(M_c+M_r)}$.

\State \hspace{\algorithmicindent}
\textbf{(2)} Define
$\mathbf{D}
=\operatorname{diag}
(\underbrace{\sqrt{w_c},\ldots,\sqrt{w_c}}_{M_c\ \mathrm{times}},
\underbrace{\sqrt{w_r},\ldots,\sqrt{w_r}}_{M_r\ \mathrm{times}})$.

\State \hspace{\algorithmicindent}
\textbf{(3)} Compute
$\widehat{\mathbf{Z}}
=(N-1)^{-1}\mathbf{D}\boldsymbol{\Theta}^{\top}
\boldsymbol{\Theta}\mathbf{D}$.

\State \textbf{Step 3: Eigen-Decomposition}
\State \hspace{\algorithmicindent}
\textbf{(1)} Solve
$\widehat{\mathbf{Z}}\mathbf{c}_m
=\widehat{\nu}_m\mathbf{c}_m$ subject to
$\|\mathbf{c}_m\|=1$, $m=1,\ldots,M$.

\State \hspace{\algorithmicindent}
\textbf{(2)} Retain the eigenvalues
$\{\widehat{\nu}_m\}_{m=1}^{M}$ and the corresponding eigenvectors
$\{\mathbf{c}_m\}_{m=1}^{M}$.

\State \textbf{Step 4: Functional Recovery and Scoring}
\For{$m=1,\ldots,M$}
    \State \textbf{(a)} Compute the scores
    $\widehat{\rho}_{i,m}
    =\boldsymbol{\Theta}_{i,\cdot}\mathbf{D}\mathbf{c}_m
    =\sum_{j\in\{c,r\}}\sum_{n=1}^{M_j}
    [\mathbf{c}_m]_{n}^{(j)}
    \sqrt{w_j}\,\widehat{\xi}_{i,n}^{(j)}$,
    $i=1,\ldots,N$.

    \State \textbf{(b)} For each $j\in\{c,r\}$, recover
    $\widehat{\psi}_m^{(j)}(t)
    =w_j^{-1/2}\sum_{n=1}^{M_j}
    [\mathbf{c}_m]_{n}^{(j)}
    \widehat{\phi}_n^{(j)}(t)$.

    \State \textbf{(c)} Set
    $\widehat{\boldsymbol{\psi}}_m(t)
    =(\widehat{\psi}_m^{(c)}(t),
    \widehat{\psi}_m^{(r)}(t))^{\top}$.
\EndFor

\State \textbf{Output:} The eigenvalues
$\{\widehat{\nu}_m\}_{m=1}^{M}$, scores
$\{\widehat{\rho}_{i,m}\}_{i=1,m=1}^{N,M}$, and eigenfunctions
$\{\widehat{\boldsymbol{\psi}}_m(t)\}_{m=1}^{M}$.

\end{algorithmic}
\end{algorithm}

\subsection{Robust clean-subset estimation}
\label{subsec:iltfs}

The IFPCA representation provides a joint low-dimensional representation of the
center and log-radius functions. However, the empirical mean function and
covariance operator estimated from the full sample may be severely distorted
by abnormal observations. Such contamination may alter the leading
eigenfunctions and eigenvalues, causing the estimated projection space to
partially accommodate the outlying curves and thereby leading to masking and
swamping. We therefore first seek a relatively clean subset of observations
from which the location, covariance structure, and principal projection space
can be estimated robustly.
\cite{ren2017projection} extended the least-trimmed-squares principle to
functional data by replacing squared regression residuals with standardized
squared functional principal component scores. Building on this construction,
we introduce the \emph{interval-valued least trimmed functional scores}
(ILTFS) method for interval-valued functional data. The proposed method
selects a subset of fixed size $h$ by minimizing a trimmed aggregate of
standardized IFPCA scores. 

Let
$
\mathcal{H}_{h}
=
\left\{
H\subseteq\{1,\ldots,N\}: |H|=h
\right\}
$
denote the collection of all candidate subsets of size $h$.
To define the ILTFS criterion, we first suppose that a fixed set of initial
eigenvalues and eigenfunctions
$
\left\{
\left(
\widehat{\nu}_{m}^{*},
\widehat{\boldsymbol{\phi}}_{m}^{*}
\right)
\right\}_{m=1}^{d}
$
is available, whose construction will be described below. For any nonempty subset $H\subseteq\{1,\ldots,N\}$, define the corresponding
subset mean function and the subset-centered IFPCA score of observation $i$
along the $m$th initial eigenfunction by
\[
\widehat{\boldsymbol{\mu}}_{H}(t)
=
\frac{1}{|H|}\sum_{j\in H}\mathbf{X}_{j}(t),
\quad t\in\mathcal{T},
\qquad
\widehat{\rho}_{im}(H)
=
\inner{
\mathbf{X}_{i}-\widehat{\boldsymbol{\mu}}_{H}
}{
\widehat{\boldsymbol{\phi}}_{m}^{*}
},
\quad m=1,\ldots,d.
\]
Based on these scores, the standardized IFPCA score distance of observation
$i$ relative to the candidate subset $H$ is defined as
\begin{equation}
\label{eq:iltfs_score_distance}
D_{i}(H)
=
\sum_{m=1}^{d}
\frac{\widehat{\rho}_{im}^{\,2}(H)}
{\widehat{\nu}_{m}^{*}},
\qquad i=1,\ldots,N.
\end{equation}
Although the initial eigenvalues and eigenfunctions are held fixed during
the subset search, $D_i(H)$ varies with $H$ through the subset mean function
$\widehat{\boldsymbol{\mu}}_{H}$. It therefore measures the standardized
departure of $\mathbf{X}_i$ from the location induced by $H$ in the common
IFPCA score space.
Let $D_{(1)}(H)\leq\cdots\leq D_{(N)}(H)$ denote the ordered score
distances associated with $H$. The ILTFS clean subset is defined as
\begin{equation}
\label{eq:iltfs_objective}
\widehat{H}_{\mathrm{ILTFS}}
\in
\underset{H\in\mathcal{H}_{h}}{\arg\min}
\sum_{j=1}^{h}D_{(j)}(H).
\end{equation}
For notational convenience, write
$Q(H)=\sum_{j=1}^{h}D_{(j)}(H)$ for the objective function in
\eqref{eq:iltfs_objective}. Thus,
$\widehat{H}_{\mathrm{ILTFS}}$ is the candidate subset whose induced mean
function yields the smallest trimmed total standardized score distance.
The corresponding ILTFS mean estimator is
$\widehat{\boldsymbol{\mu}}_{\mathrm{ILTFS}}(t)
=h^{-1}\sum_{i\in\widehat{H}_{\mathrm{ILTFS}}}\mathbf{X}_{i}(t)$.

It remains to specify how the fixed initial eigenvalues and eigenfunctions
used in \eqref{eq:iltfs_score_distance} are obtained. To compare different
candidate subsets meaningfully, their score distances must be evaluated with
respect to a common set of projection directions and standardization factors.
Accordingly, the eigenfunctions and eigenvalues used during the subset search
are estimated once from a robust initial subset and are then held fixed
throughout the search. This avoids allowing each candidate subset to determine
its own coordinate system, under which the resulting distance values would
not be directly comparable.
Following \citet{TianLiPeng2026}, we first apply the interval-valued minimum diagonal product (IMDP), an interval-valued extension of the minimum diagonal product criterion, to obtain an initial subset
$H_{\mathrm{IMDP}}$. We then
perform IFPCA on the observations indexed by $H_{\mathrm{IMDP}}$ and let
$
\left\{
\left(
\widehat{\nu}_{m}^{*},
\widehat{\boldsymbol{\phi}}_{m}^{*}
\right)
\right\}_{m=1}^{d},
\:
\widehat{\nu}_{1}^{*}
\geq
\cdots
\geq
\widehat{\nu}_{d}^{*}
>
0
$
denote the resulting initial eigenvalues and eigenfunctions. These quantities
provide common projection directions and variance-standardization factors
for evaluating all candidate subsets.

An exhaustive minimization of \eqref{eq:iltfs_objective} is computationally
infeasible, we therefore adapt the
concentration-step strategy. Given a current subset $H_k$, we compute its
subset mean $\widehat{\boldsymbol{\mu}}_{H_k}$ and the distances
$D_i(H_k)$ for all $i=1,\ldots,N$. After ordering these distances, the
indices associated with $D_{(1)}(H_k),\ldots,D_{(h)}(H_k)$ are retained to
form the updated subset $H_{k+1}$.
The concentration step is repeated until the selected subset no longer
changes. Since this greedy procedure may converge to different local minima
from different starting subsets, it is run from $M$ random initial subsets.
Each run starts from a randomly selected subset $H_{0}$ containing two
observations. Among the $M$ converged subsets, the one with the smallest value
of $Q(H)$ is retained as $\widehat{H}_{\mathrm{ILTFS}}$. The complete
procedure is summarized in Algorithm~\ref{alg:iltfs}.
\begin{algorithm}[H]
\caption{ILTFS Clean-Subset Estimation}
\label{alg:iltfs}
\begin{algorithmic}

\State \textbf{Step 1: Robust Initialization}
\State \hspace{\algorithmicindent}
\textbf{(1)} Apply the IMDP criterion to obtain an initial subset
$\widehat{H}_{\mathrm{IMDP}}$ of size $h$.
\State \hspace{\algorithmicindent}
\textbf{(2)} Perform IFPCA on $\widehat{H}_{\mathrm{IMDP}}$ to obtain the
initial eigenvalues $\{\widehat{\nu}_m^{*}\}_{m=1}^{d}$ and eigenfunctions
$\{\widehat{\boldsymbol{\phi}}_m^{*}(t)\}_{m=1}^{d}$.

\State \textbf{Step 2: ILTFS Subset Search}
\For{$r=1,\ldots,M$}
    \State \textbf{(a)} Randomly select an initial subset
    $H_0^{(r)}\subset\{1,\ldots,N\}$ with $|H_0^{(r)}|=2$,
    and set $k\gets0$.
    \Repeat
        \State \textbf{(b)} Compute the subset mean
        $\widehat{\boldsymbol{\mu}}_{H_k^{(r)}}(t)$ and the distances
        $D_i(H_k^{(r)})$, $i=1,\ldots,N$.
        \State \textbf{(c)} Retain the $h$ observations with the smallest
        values of $D_i(H_k^{(r)})$ to form the updated subset
        $H_{k+1}^{(r)}$.
        \State \textbf{(d)} Set $k\gets k+1$.
    \Until{$H_k^{(r)}=H_{k-1}^{(r)}$}
    \State \textbf{(e)} Set $\widetilde{H}_r\gets H_k^{(r)}$ and
    $Q_r\gets Q(\widetilde{H}_r)$.
\EndFor

\State \textbf{Step 3: Final Selection}
\State \hspace{\algorithmicindent}
\textbf{(1)} Set
$r^{*}\gets\arg\min_{1\leq r\leq M}Q_r$.
\State \hspace{\algorithmicindent}
\textbf{(2)} Set
$\widehat{H}_{\mathrm{ILTFS}}\gets\widetilde{H}_{r^{*}}$.

\State \textbf{Output:} The clean subset
$\widehat{H}_{\mathrm{ILTFS}}$.

\end{algorithmic}
\end{algorithm}
\subsection{Refinement and outlier detection}

The subset ${H}_{\mathrm{ILTFS}}$ is obtained through trimming and therefore provides a robust but conservative estimate of the  observations. Some normal curves may be excluded together with the contaminated observations, so its complement cannot be interpreted directly as the outlier set. To recover these observations, set $H_{0}={H}_{\mathrm{ILTFS}}$, re-estimate the IFPCA components from $H_{0}$, and compute the corresponding projection distances $D_{i}^{(0)}$.To refine the conservative subset $H_0$, we follow the robust scale construction of \cite{rousseeuw1993alternatives} and use the $S_n$ estimator to define a robust upper bound for the projection distances. Specifically, for $i,j=1,\ldots,N$, let $Z_i^{(0)}=\ln(D_i^{(0)}+\epsilon)$, $\hat{\mu}^{(0)}=\operatorname*{med}_{i}Z_i^{(0)}$, and $\hat{\sigma}^{(0)}=1.1926\,\operatorname*{med}_{i}\{\operatorname*{med}_{j}|Z_i^{(0)}-Z_j^{(0)}|\}$, where $\epsilon>0$ prevents taking the logarithm of zero. The refined subset is then defined as
\begin{equation}
	\label{eq:refined_subset}
	H_1
	=
	H_0
	\cup
	\left\{
	i:
	Z_i^{(0)}
	\leq
	\hat{\mu}^{(0)}
	+
	\gamma\hat{\sigma}^{(0)}
	\right\},
\end{equation}
where $\gamma>0$ determines the refinement threshold. This rule preserves the observations in $H_0$ and adds those whose log-distances do not exceed the robust upper bound.

Using $H_{1}$ as the reference subset, we re-estimate the IFPCA components and compute the final projection distances $D_{i}$. The refined subset is used to construct an empirical reference distribution. Specifically, the significance level associated with observation $i$ is defined by the empirical right-tail probability
\begin{equation}
	\label{eq:empirical_pvalue}
	p_{i}
	=
	\frac{
		\#\left\{
		j\in H_{1}:
		Z_{j}\geq Z_{i}
		\right\}
	}{
		|H_{1}|
	},
	\qquad i=1,\ldots,N.
\end{equation}
A smaller value of $p_{i}$ indicates that observation $i$ lies further in the upper tail of the projection-distance distribution represented by $H_{1}$.
The collection $\{p_{i}\}_{i=1}^{N}$ is treated as a multiple-testing problem, and the Benjamini--Hochberg procedure is applied at a prescribed FDR level $q\in(0,1)$ \citep{benjamini1995controlling}. Let $p_{(1)}\leq\cdots\leq p_{(N)}$ denote the ordered empirical $p$-values. The adaptive rejection threshold and the resulting outlier set are determined by
\begin{equation}
	\label{eq:bh_detection}
	\hat{k}
	=
	\max
	\left\{
	k\in\{1,\ldots,N\}:
	p_{(k)}
	\leq
	\frac{k}{N}q
	\right\},
	\qquad
	\hat{\tau}
	=
	p_{(\hat{k})},
	\qquad
	\hat{\mathcal{O}}
	=
	\left\{
	i:
	p_{i}\leq\hat{\tau}
	\right\}.
\end{equation}
If the set defining $\hat{k}$ is empty, we set $\hat{\tau}=0$. The complete refinement and detection procedure is summarized in Algorithm~\ref{alg:DetectionFDR}.
\begin{algorithm}[H]
\caption{ILTFS-FDR Outlier Detection Procedure}
\label{alg:DetectionFDR}
\begin{algorithmic}
\Require Interval-valued functional observations $\mathcal{X}_N$;
ILTFS subset $\widehat{H}_{\mathrm{ILTFS}}$;
retained dimension $d$;
refinement parameter $\gamma$;
stabilizing constant $\epsilon>0$;
target FDR level $q$
\Ensure Refined subset $H_1$ and estimated outlier set
$\widehat{\mathcal{O}}$

\State \textbf{Step 1: Subset Refinement}
\State \hspace{\algorithmicindent}\textbf{(1)} Set
$H_0\gets\widehat{H}_{\mathrm{ILTFS}}$.
\State \hspace{\algorithmicindent}\textbf{(2)} Perform IFPCA on the
observations indexed by $H_0$ and compute the projection distances
$D_i^{(0)}$, $i=1,\ldots,N$.
\State \hspace{\algorithmicindent}\textbf{(3)} Set
$Z_i^{(0)}\gets\log(D_i^{(0)}+\epsilon)$ and estimate the location
$\widehat{\mu}^{(0)}$ and scale $\widehat{\sigma}^{(0)}$ using the
median and the $S_n$ estimator, respectively.
\State \hspace{\algorithmicindent}\textbf{(4)} Construct the refined
subset $H_1$ according to \eqref{eq:refined_subset}.

\State \textbf{Step 2: Empirical \(\boldsymbol{p}\)-Values}
\State \hspace{\algorithmicindent}\textbf{(1)} Perform IFPCA on the
observations indexed by $H_1$ and compute the projection distances
$D_i$, $i=1,\ldots,N$.
\State \hspace{\algorithmicindent}\textbf{(2)} Set
$Z_i\gets\log(D_i+\epsilon)$ and compute the empirical $p$-values
$p_i$, $i=1,\ldots,N$, according to
\eqref{eq:empirical_pvalue}.

\State \textbf{Step 3: FDR-Controlled Detection}
\State \hspace{\algorithmicindent}\textbf{(1)} Order the empirical
$p$-values as $p_{(1)}\leq\cdots\leq p_{(N)}$.
\State \hspace{\algorithmicindent}\textbf{(2)} Determine
$\widehat{k}$ and the adaptive threshold $\widehat{\tau}$ according to
\eqref{eq:bh_detection}.
\State \hspace{\algorithmicindent}\textbf{(3)} Set
$\widehat{\mathcal{O}}\gets
\{i:p_i\leq\widehat{\tau}\}$.

\State \textbf{Output:} The refined subset $H_1$ and the estimated
outlier set $\widehat{\mathcal{O}}$.
\end{algorithmic}
\end{algorithm}

\section{Theoretical properties}
 \label{sec3}

This section establishes the theoretical framework for the proposed interval-valued functional outlier detection method. To verify the effectiveness and reliability of this method in handling interval-valued functional data, we systematically study the statistical robustness of the estimator and the convergence of the algorithm.
Given that outliers can severely distort statistical inference, it is crucial to assess the estimator's resistance to contamination. To rigorous evaluate the robustness of the proposed outlier detection procedure, we investigate the finite-sample breakdown point of the estimator. 
Since our outlier detection relies on the subspace spanned by the first $d$ interval-valued functional principal components, we adopt this projection-based breakdown point. 
The breakdown point of the estimator $\hat{\boldsymbol{\mu}}$ is defined as:
\begin{equation}
    b_{N, \boldsymbol{\psi}}(\hat{\boldsymbol{\mu}}, \mathcal{X}) = \min_{1 \le m \le N} \left\{ \frac{m}{N} : \max_{1 \le k \le d} \sup_{\mathcal{X}'} \|\hat{\boldsymbol{\mu}}_{\mathcal{X}} - \hat{\boldsymbol{\mu}}_{\mathcal{X}'}\|_{\boldsymbol{\psi}_k} = \infty \right\},
\end{equation}
where the supremum is taken over all possible corrupted samples $\mathcal{X}'$ obtained by replacing $m$ original observations in $\mathcal{X}$ with arbitrary interval-valued functions.
The following theorem establishes the breakdown point of the proposed ILTFS estimator. 

\begin{theorem} \label{thm:breakdown}Let $h$ be the size of the subset used in the ILTFS procedure. The finite-sample breakdown point of the proposed estimator $\hat{\boldsymbol{\mu}}_{H_{ILTFS}}$ is given by:\begin{equation}\epsilon_N = \frac{\min(N - h + 1, h)}{N}.\end{equation}\end{theorem}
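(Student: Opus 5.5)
We prove the two inequalities $\epsilon_N\ge \min(N-h+1,h)/N$ and $\epsilon_N\le \min(N-h+1,h)/N$ separately. This follows the classical LTS argument of \cite{rousseeuw1987robust}, adapted to the projected seminorms $\|\cdot\|_{\boldsymbol{\psi}_k}$. Throughout we treat the initial eigenpairs $(\widehat{\nu}_m^*,\widehat{\boldsymbol{\phi}}_m^*)$ as fixed. Hence $Q(H)$ depends on the data only through the projected scores, and $D_i(H)$ is a positive definite quadratic form in the coordinates
\[
a_{im}(H)=\inner{\mathbf{X}_i-\widehat{\boldsymbol{\mu}}_H}{\widehat{\boldsymbol{\phi}}_m^*},\qquad m=1,\dots,d .
\]
Consequently $D_i(H)\ge c\,\max_m a_{im}^2(H)$ with $c=1/\widehat{\nu}_1^*>0$. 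Breakdown in the direction $\boldsymbol{\psi}_k$ means that the projection of $\widehat{\boldsymbol{\mu}}_{\widehat H}$ onto some retained direction diverges. We identify these directions with the fixed $\widehat{\boldsymbol{\phi}}_m^*$.

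\textbf{Lower bound.} Let $m<\min(N-h+1,h)$ observations be replaced. Since $m\le N-h$, at least $h$ original points survive, so there is a clean candidate $H^\circ$ of size $h$. Its value $Q(H^\circ)\le \sum_{i\in H^\circ}D_i(H^\circ)$ is bounded by a constant $C_0$ that depends only on the original sample, uniformly over all corruptions.

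Now let $\widehat H$ be the minimizer. Because $m<h$, $\widehat H$ contains at least one clean point $i_0$. Suppose $\widehat{\boldsymbol{\mu}}_{\widehat H}$ has unbounded projection along some direction. Then $a_{i_0m}(\widehat H)$ is unbounded, since $\mathbf{X}_{i_0}$ has bounded projections. This forces $Q(\widehat H)\ge D_{i_0}(\widehat H)\to\infty$.

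There is a subtlety: $Q(\widehat H)$ sums the $h$ smallest distances over \emph{all} $i$, not over $\widehat H$. We handle it as follows. Among the $h$ indices attaining $Q(\widehat H)$, at least $h-m\ge1$ are clean, and every clean point is at unbounded distance from a divergent mean. Hence $Q(\widehat H)\to\infty$, contradicting $Q(\widehat H)\le Q(H^\circ)\le C_0$. This gives $\epsilon_N\ge\min(N-h+1,h)/N$.

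\textbf{Upper bound.} We give two constructions.

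\emph{Case $m=h$.} Replace $h$ points by the same curve $\mathbf{Y}_\lambda=\lambda\widehat{\boldsymbol{\phi}}_1^*$, letting $\lambda\to\infty$. The subset consisting of these $h$ copies has $Q=0$, because its mean equals each member, so all $h$ smallest distances are zero. Any subset containing a clean point has positive $Q$ for large $\lambda$. Hence the minimizer is the contaminated block, and its mean has projection $\lambda\to\infty$.

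\emph{Case $m=N-h+1$.} Only $h-1$ clean points remain, so every $h$-subset contains at least one replaced point. Place the replaced points far out along $\widehat{\boldsymbol{\phi}}_1^*$, for instance all at $\lambda\widehat{\boldsymbol{\phi}}_1^*$. Then $\widehat{\boldsymbol{\mu}}_{\widehat H}$ contains at least a $1/h$ share of a term of size $\lambda$, so its projection diverges.

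Together the two cases give $\epsilon_N\le\min(N-h+1,h)/N$.

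\textbf{Main obstacle.} The delicate point is that $Q(H)$ is not the within-subset sum but a trimmed sum over all $N$ points. Two issues follow. First, in the lower bound we must justify that a divergent mean makes the $h$ smallest distances diverge. This holds because at most $m<h$ of them can come from corrupted points, which may travel with the mean. Second, in the $m=h$ case the minimizer must be unique, or else we must argue that every minimizer breaks down. The ordering of minimizers handles this: the clean alternatives have $Q$ bounded away from zero, since the original points are assumed to be in general position in the score space. I would state this general-position assumption explicitly.

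A further issue is that the initial eigenpairs come from IMDP on the corrupted sample. I would assume that the IMDP step itself does not break down, meaning it has breakdown at least the same value and yields $\widehat{\nu}_m^*$ bounded away from $0$ and $\infty$. I would also cite \citet{TianLiPeng2026} for this property, rather than reprove it.
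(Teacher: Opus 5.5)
Your proof is correct under the two hypotheses you flag, and it has the same two-sided skeleton as the paper. A clean $h$-subset gives a corruption-free bound on the optimal objective, a diverging projected mean forces the objective to diverge, and for the upper bound the replaced curves are pushed off along an eigendirection. The key lower-bound step, however, is genuinely different.

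\emph{The paper's lower bound.} It bounds the within-subset sum $\sum_{i\in H_{opt}}D_i(H_{opt})$ from below by decomposing around the mean of the clean part $J_0\subseteq H_{opt}$. It also evaluates each subset with its own eigenpairs $\hat\lambda_{k,H}$, $\hat\bpsi_{k,H}$. Neither choice matches the ILTFS criterion, which is the trimmed sum $Q(H)$ over all $N$ curves with $(\widehat\nu_m^*,\widehat{\boldsymbol\phi}_m^*)$ held fixed. With subset-specific eigenpairs, the paper's final comparison needs $\hat\lambda_{l,opt}$ to stay bounded. That fails in general, because a far-out curve in $H_{opt}$ inflates the variance along the very direction in which the mean diverges.

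\emph{Your lower bound.} At most $m$ of the $h$ smallest distances belong to replaced curves. So at least $h-m\ge 1$ clean curves enter $Q(\widehat H)$, each at distance at least $(M-M_1^{1/2})^2/\widehat\nu_1^*$. This works directly with $Q$ and gives an explicit uniform bound on the projected mean. It does not use $i_0\in\widehat H$. You should delete your first attempt $Q(\widehat H)\ge D_{i_0}(\widehat H)$, since $i_0$ need not be among the $h$ smallest distances.

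\emph{Upper bound.} Your case $m=N-h+1$ is the paper's argument, and placing all replaced curves at one point rules out cancellation. Your exact-fit construction for $m=h$ supplies what the paper dismisses as ``similarly''.

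The general-position hypothesis needed for the exact-fit case is genuinely required, as is the condition that the IMDP eigenvalues stay bounded away from $0$ and $\infty$. The paper leaves both implicit. The paper's route is shorter, but yours matches the estimator as it is actually defined and makes its hypotheses explicit.
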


\begin{remark}Theorem \ref{thm:breakdown} provides a theoretical guide for the selection of the subset size $h$. To maximize the robustness of the procedure, one typically chooses $h \approx \lfloor N/2 \rfloor + 1$. With this configuration, the breakdown point reaches asymptotically $50\%$, meaning the proposed method can withstand nearly half of the data being arbitrarily contaminated without breaking down.\end{remark}
The core logic of the proof of Theorem \ref{thm:breakdown} relies on comparing the objective function values between a clean subset and a potentially contaminated one. We demonstrate that if the number of outliers $m$ is less than $\min(N - h + 1, h)$, there exists at least one clean subset $H_1$ consisting entirely of non-outlying observations. The objective function evaluated on this clean subset provides a finite upper bound. Consequently, any subset that includes enough outliers to drive the norm of the mean estimator $\|\hat{\boldsymbol{\mu}}\|$ to infinity would result in an objective function value strictly exceeding this bound. Since the ILTFS algorithm seeks to minimize this objective function, such a contaminated subset would not be selected, thereby ensuring the boundedness of the estimator.

The proposed ILTFS procedure employs a fast iterative algorithm based on Concentration Steps, as detailed in Algorithm \ref{alg:iltfs}. A critical requirement for such an iterative scheme is its convergence that the objective function must not increase after each update.
\begin{theorem} \label{thm:convergence}Let $H_1 \subset \{1, \dots, N\}$ be an arbitrary subset of size $h$. Let $D_{(1)}(H_1) \le \dots \le D_{(N)}(H_1)$ denote the ordered distances computed using the estimators derived from $H_1$. Construct a new subset $H_2$ consisting of the indices corresponding to the $h$ smallest distances, such that $\{D_i(H_1) : i \in H_2\} = \{D_{(1)}(H_1), \dots, D_{(h)}(H_1)\}$. Then, the following inequality holds:\begin{equation}\sum_{i=1}^h D_{(i)}(H_2) \le \sum_{i=1}^h D_{(i)}(H_1),\end{equation}where $D_{(i)}(H_2)$ are the ordered distances computed using the updated estimators derived from $H_2$.\end{theorem}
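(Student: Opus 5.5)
The argument is the usual two-step inequality used for C-steps in LTS/MCD. The eigenvalues $\widehat{\nu}_m^*$ and eigenfunctions $\widehat{\boldsymbol{\phi}}_m^*$ are held fixed during the search, so $D_i(H)$ depends on $H$ only through the subset mean $\widehat{\boldsymbol{\mu}}_H$. For a function $\mathbf{g}\in\mathcal{H}$ we therefore write
\[
D_i(\mathbf{g})=\sum_{m=1}^d \frac{\inner{\mathbf{X}_i-\mathbf{g}}{\widehat{\boldsymbol{\phi}}_m^*}^2}{\widehat{\nu}_m^*},
\]
so that $D_i(H)=D_i(\widehat{\boldsymbol{\mu}}_H)$. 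The proof is the chain
\[
\sum_{i=1}^h D_{(i)}(H_2)\le \sum_{i\in H_2} D_i(\widehat{\boldsymbol{\mu}}_{H_2})\le \sum_{i\in H_2} D_i(\widehat{\boldsymbol{\mu}}_{H_1})=\sum_{i=1}^h D_{(i)}(H_1),
\]
and I would establish the three relations separately.

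\textbf{First inequality.} For any set $S$ of $h$ indices, the sum of the $h$ smallest values among $D_1(H_2),\dots,D_N(H_2)$ is at most $\sum_{i\in S}D_i(H_2)$. Taking $S=H_2$ gives the bound.

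\textbf{Final equality.} This holds by the construction of $H_2$: its members are exactly the indices carrying the $h$ smallest distances $D_{(1)}(H_1),\dots,D_{(h)}(H_1)$. If there are ties, any choice of $H_2$ gives the same sum.

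\textbf{Middle inequality.} This is the substantive step. I would show that, over a fixed index set $H_2$, the function $\mathbf{g}\mapsto\sum_{i\in H_2}D_i(\mathbf{g})$ is minimized at $\mathbf{g}=\widehat{\boldsymbol{\mu}}_{H_2}$. Put $a_{im}=\inner{\mathbf{X}_i}{\widehat{\boldsymbol{\phi}}_m^*}$ and $b_m=\inner{\mathbf{g}}{\widehat{\boldsymbol{\phi}}_m^*}$. Bilinearity of the weighted inner product gives
\[
\sum_{i\in H_2}D_i(\mathbf{g})=\sum_{m=1}^d \frac{1}{\widehat{\nu}_m^*}\sum_{i\in H_2}(a_{im}-b_m)^2 .
\]
Each inner sum is minimized when $b_m=\bar a_m=h^{-1}\sum_{i\in H_2}a_{im}$. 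In fact the bias–variance decomposition
\[
\sum_{i\in H_2}(a_{im}-b_m)^2=\sum_{i\in H_2}(a_{im}-\bar a_m)^2+h(\bar a_m-b_m)^2
\]
shows this directly. Linearity gives $\bar a_m=\inner{\widehat{\boldsymbol{\mu}}_{H_2}}{\widehat{\boldsymbol{\phi}}_m^*}$, and the weights $1/\widehat{\nu}_m^*$ are positive, so the choice $\mathbf{g}=\widehat{\boldsymbol{\mu}}_{H_2}$ minimizes every coordinate at once. Comparing with the feasible choice $\mathbf{g}=\widehat{\boldsymbol{\mu}}_{H_1}$ yields the middle inequality.

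The only real obstacle is conceptual. The theorem's phrase "estimators derived from $H_2$" must mean that only the mean is updated while the projection directions and standardizations stay at their common IMDP-based values, as specified for Algorithm~\ref{alg:iltfs}. If the eigenpairs were also re-estimated from $H_2$, the middle step would need a determinant-type argument as in the MCD C-step, and the plain least-squares argument would not suffice. I would state this interpretation explicitly at the start, noting that it matches the definition of $D_i(H)$ in \eqref{eq:iltfs_score_distance}.

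As a corollary, $Q(H_k)$ is non-increasing along the iterations. Since $\mathcal{H}_h$ is finite, the C-step sequence stabilizes after finitely many steps.
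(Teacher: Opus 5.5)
Your proposal is correct and follows the paper's proof essentially step for step: the same chain $\sum_{i=1}^h D_{(i)}(H_2)\le\sum_{i\in H_2}D_i(H_2)\le\sum_{i\in H_2}D_i(H_1)=\sum_{i=1}^h D_{(i)}(H_1)$. The middle step comes from expanding around $\widehat{\boldsymbol{\mu}}_{H_2}$ so that the cross term vanishes, which is exactly your bias--variance decomposition. Like you, the paper holds the eigenvalues and eigenfunctions fixed and updates only the subset mean, so your explicit interpretive caveat matches what the paper actually proves.
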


The proof relies on decomposing the update process into two stages. First, by the definition of order statistics, selecting the observations with the smallest distances naturally minimizes the sum of distances for the given estimators. Second, re-estimating the mean and covariance parameters based on the new subset $H_2$ further reduces the sum of squared distances within $H_2$.

\section{Simulation studies} \label{sec4}
In this section, we evaluate the finite-sample performance of the proposed outlier detection procedure. The simulation design follows the general framework of \cite{ren2017projection}, while extending it from point-valued functional observations to interval-valued functional data. Based on this, the simulation study examines whether the proposed procedure can detect abnormal behavior arising from their joint evolution under different temporal dependence structures, contamination compositions, sample sizes, and grid resolutions.
For each interval-valued functional observations, we first generate a latent center process $X_i^c(t)$ and a latent log-radius process $X_i^r(t)$. Specifically, we set $X_i^c(t)=\mu_c(t)+\varepsilon_i^c(t),\: X_i^r(t)=\mu_r(t)+\varepsilon_i^r(t).$
Without loss of generality, we set $\mu_c(t)=\mu_r(t)=0$.

Dependence between the center and log-radius components is introduced through
$\varepsilon_i^c(t)=e_i^c(t)+\beta z_i(t)$ and $\varepsilon_i^r(t)=e_i^r(t)+\beta z_i(t)$,where $e_i^c(t)$ and $e_i^r(t)$ are component-specific stochastic processes, $z_i(t)$ is a process shared by the two components, and $\beta$ controls the strength of their dependence. 
All stochastic processes are generated on the equally spaced grid $t_j=j/p$, $j=1,\ldots,p$. We consider three data-generating processes (DGPs): Brownian motion (BM), an autoregressive process (AR), and a moving average process (MA). \begin{itemize}\item \textit{DGP1 (BM)} the process follows Brownian motion with independent increments satisfying $e_i(t_{j+1})-e_i(t_j)\sim\mathcal{N}(0,0.2^2)$\item \textit{DGP2 (AR)} the process follows the AR(2) recursion $e_i(t_j)=e_i(t_{j-1})-0.9e_i(t_{j-2})+u_i(t_j)$, where $u_i(t_j)\sim\mathcal{N}(0,1)$\item \textit{DGP3 (MA)} the process follows the MA(2) specification $e_i(t_j)=u_i(t_j)+0.5u_i(t_{j-1})+0.3u_i(t_{j-2})$, with $u_i(t_j)\sim\mathcal{N}(0,1)$ \end{itemize}
Under each DGP, three mutually independent stochastic processes, \(e_i^c(t)\), \(e_i^r(t)\), and \(z_i(t)\), are generated according to the same underlying mechanism, representing the center-specific, radius-specific, and shared components, respectively. These three mechanisms represent distinct forms of temporal dependence and therefore allow us to assess whether the detection performance is sensitive to the underlying stochastic dynamics.

Contaminated samples are generated by combining two outlier mechanisms. Let $\rho$ denote the contamination proportion and let $\mathcal{O}_N$ be the set of outlying observations. \textit{Shape Outlier} is produced by adding a localized sinusoidal perturbation to the center function over $[1/3,1/2]$,
$
    \mu_i^c(t)
    =\mu_0^c(t)
    +\gamma\sin(2\pi t)I_{[1/3,\,1/2]}(t),
    \: i\in\mathcal{O}_N,
$
while simultaneously enlarging the radius by a factor of $1.5$ over the same interval.
\textit{Magnitude Outlier} is produced by adding a localized linearly increasing perturbation to the center function over a randomly selected interval $[t_a,t_b]$,
$
    \mu_i^c(t)
    =\mu_0^c(t)
    +\gamma t I_{[t_a,\,t_b]}(t),
    \: i\in\mathcal{O}_N,
$
while simultaneously introducing an additive enlargement of the radius over the same interval,
$
   \mu_i^r(t)
    =\mu_0^r(t)
    +0.2\gamma I_{[t_a,\,t_b]}(t),
    \: i\in\mathcal{O}_N,
$
where $t_a$ and $t_b$ are sampled from the observation grid and $\gamma$ controls the contamination intensity.
We then consider two mixture scenarios. \textit{Case I} contains 75\% shape outliers and 25\% magnitude outliers, whereas \textit{Case II} reverses these proportions. The comparison between the two cases is intended to determine whether localized shape distortions and level-shift deviations present different levels of detection difficulty.

The simulation settings are $N\in\{100,200,500\}$, $p\in\{300,500\}$, and $\rho\in\{0.02,0.04,0.10,0.20\}$. Unless otherwise stated, we set the outlier intensity to $\gamma=3.0$ and the nominal false discovery rate to $q=0.05$. Each configuration is independently replicated 100 times, and the reported results are Monte Carlo averages. Detection performance is evaluated using the true positive rate (TPR),
false positive rate (FPR), and false discovery rate (FDR), defined as
$$
\mathrm{TPR}
=\frac{|\widehat{\mathcal{O}}_N\cap\mathcal{O}_N|}
{|\mathcal{O}_N|},
\qquad
\mathrm{FPR}
=\frac{|\widehat{\mathcal{O}}_N\setminus\mathcal{O}_N|}
{N-|\mathcal{O}_N|},
\qquad
\mathrm{FDR}
=\frac{|\widehat{\mathcal{O}}_N\setminus\mathcal{O}_N|}
{\max\{|\widehat{\mathcal{O}}_N|,1\}}.
$$
Here, $\widehat{\mathcal{O}}_N$ denotes the estimated outlier set.
TPR measures the proportion of true outliers that are correctly
identified, FPR measures the proportion of normal observations that are
incorrectly flagged as outliers, and FDR measures the proportion of false
discoveries among all observations declared anomalous.

Figures~\ref{fig:simulation_composite_p300} and~\ref{fig:simulation_composite_p500} summarize the main detection results for $p=300$ and $p=500$, respectively. Across the three stochastic processes, the proposed method generally attains high detection power. The strongest and most stable performance is observed under the AR process, for which the TPR remains close to or above 0.95 in most configurations. By comparison, the BM and MA settings exhibit greater sensitivity to the contamination proportion and the mixture composition. In particular, Case II is generally more difficult than Case I, indicating that the randomly located level-shift outliers are harder to distinguish than the localized shape perturbations under the current simulation design. This difference is most visible in configurations with smaller samples, a coarser observation grid, or a relatively large proportion of magnitude outliers.

The comparison between the two grid resolutions shows that increasing $p$ from 300 to 500 generally stabilizes the detection performance, especially under BM and MA dependence. A denser grid provides more information about local departures and therefore improves the representation of short-lived anomalous features. Larger sample sizes also tend to reduce finite-sample variability. Nevertheless, the influence of the contamination proportion is not uniform across the three DGPs. Very sparse contamination makes FDR control difficult because even a small number of false positives can account for a large fraction of all discoveries. At the other extreme, under BM dependence, a high contamination proportion can reduce TPR when a more conservative threshold is used, suggesting that dense contamination may affect the robust scale calibration. Thus, sparse and dense contamination create different finite-sample challenges: the former primarily affects the stability of FDR, whereas the latter may reduce power in the more difficult dependence settings.

Table~\ref{tab:performance_k} summarizes the effect of different robust scale multipliers $\gamma$ on the experimental results, with $N$ fixed at 100 and $p$ at 500. The results indicate that $\gamma=1$ is highly liberal, as it maintains an average TPR close to 0.98 while producing FPR values ranging from approximately 2.5\% to 8.3\% and empirical FDR levels that substantially exceed the nominal level in most configurations. Increasing the multiplier to $\gamma=2$ reduces the FPR to below 0.7\% in all reported settings while retaining high power under AR and MA dependence. However, when $\rho=0.02$, the empirical FDR remains between approximately 0.170 and 0.219 because the number of true outliers is very small. Once $\rho$ increases to 0.10 or 0.20, the FDR for $\gamma=2$ is generally below 0.025, demonstrating effective error control outside the extremely sparse regime.
A further comparison across the different $\gamma$ settings reveals a clear power--conservativeness trade-off. The choice $\gamma=3$ nearly eliminates false discoveries, but the gain in error control can be accompanied by a substantial loss of power. This loss is particularly pronounced under BM and MA dependence at larger contamination proportions; for example, when $\rho=0.20$, the TPR under BM decreases to 0.520 in Case I and 0.480 in Case II. In contrast, the AR process remains comparatively insensitive to the multiplier, with TPR values above 0.92 in all settings and above 0.95 in most cases. These findings indicate that the effect of $\gamma$ depends not only on the desired degree of FDR control but also on the temporal dependence structure. Based on these results, $\gamma=2$ provides the most balanced performance in the reported experiments, whereas $\gamma=3$ is appropriate only when avoiding false discoveries is considerably more important than maintaining detection power.

Overall, the simulation result support three main conclusions. First, ILTFS-FDR can identify both shape and magnitude outliers in interval-valued functional data across a broad range of finite-sample settings. Second, detection is generally easier under AR dependence and for the shape-dominated Case I, while BM dependence and magnitude-dominated contamination are more challenging. Third, the robust scale multiplier governs a meaningful trade-off between power and false-discovery control: moderate calibration yields strong overall performance, whereas overly liberal or overly conservative choices can respectively inflate FDR or reduce TPR. These result validate the effectiveness of the proposed method while also identifying the sparse-contamination regime and complex dependence structures as the principal finite-sample limitations.

\begin{figure}[htbp]
    \centering
    \includegraphics[width=0.80\textwidth]{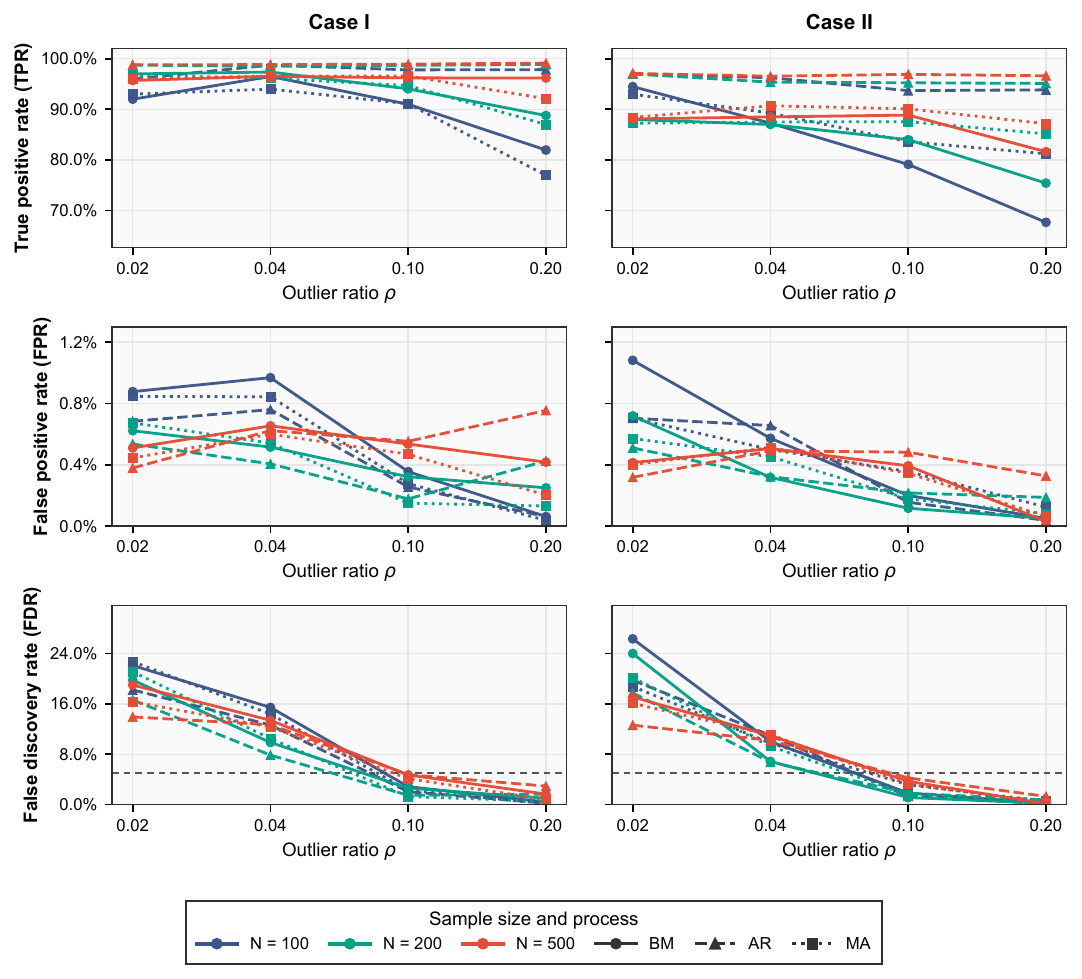}
    \caption{Detection performance of the proposed method under the fixed grid resolution $p=300$.}
    \label{fig:simulation_composite_p300}
\end{figure}

\begin{figure}[htbp]
    \centering
    \includegraphics[width=0.80\textwidth]{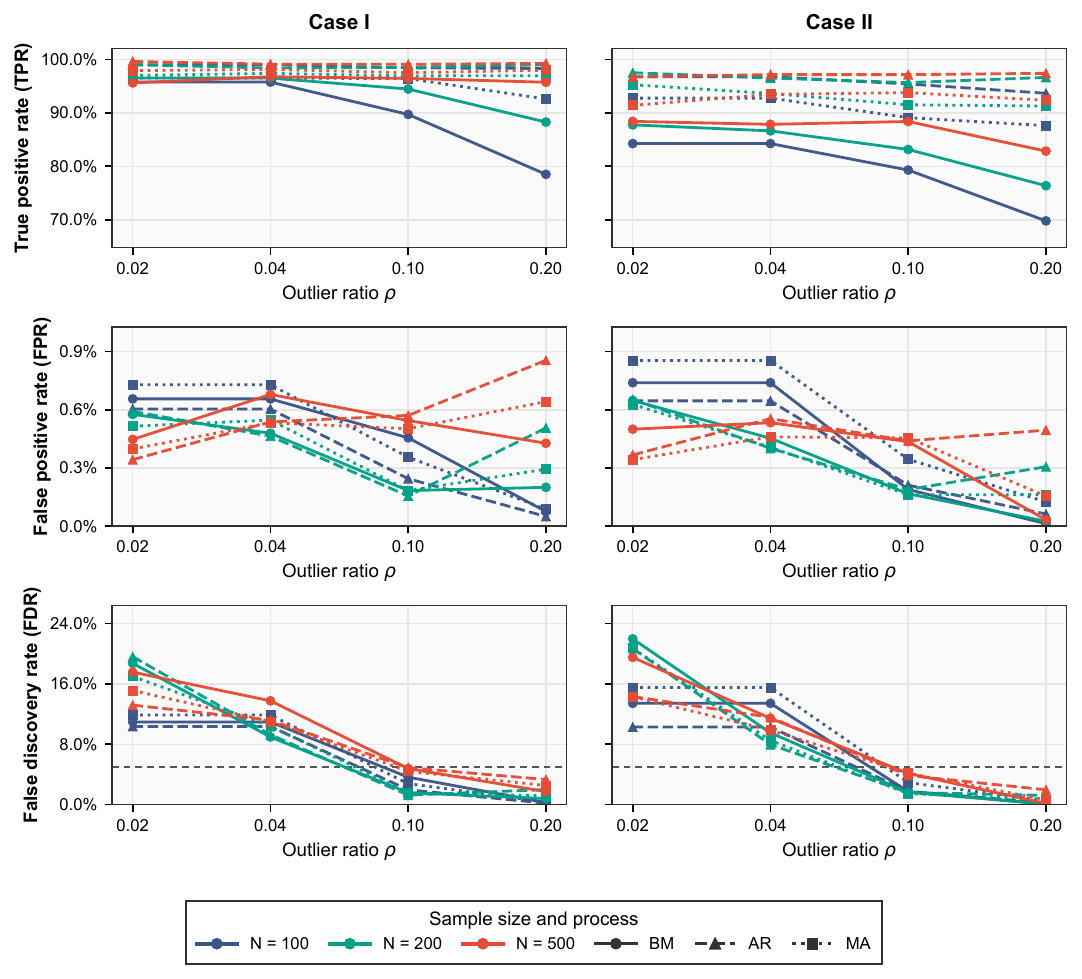}
    \caption{Detection performance of the proposed method under the fixed grid resolution $p=500$.}
    \label{fig:simulation_composite_p500}
\end{figure}

\begin{table}[htbp]
\centering
\caption{Detection performance across contamination proportions $\rho$ and robust scale multipliers $\gamma$, and the FDR level is $q=0.05$.}
\label{tab:performance_k}
\resizebox{\textwidth}{!}{
\begin{tabular}{cllccccccccc}
\toprule
\multirow{2}{*}{$\rho$} & \multirow{2}{*}{process} & \multirow{2}{*}{case} & \multicolumn{3}{c}{$\gamma=1$} & \multicolumn{3}{c}{$\gamma=2$} & \multicolumn{3}{c}{$\gamma=3$} \\
\cmidrule(lr){4-6} \cmidrule(lr){7-9} \cmidrule(lr){10-12}
 & & & TPR & FPR & FDR & TPR & FPR & FDR & TPR & FPR & FDR \\
\midrule
\multirow{6}{*}{$0.02$}
& BM & Case I & 0.978 & 0.079 & 0.743 & 0.965 & 0.006 & 0.187 & 0.930 & 0.000 & 0.014 \\
& & Case II & 0.915 & 0.078 & 0.753 & 0.878 & 0.007 & 0.219 & 0.845 & 0.000 & 0.014 \\
& AR & Case I & 0.993 & 0.064 & 0.705 & 0.990 & 0.006 & 0.196 & 0.985 & 0.000 & 0.000 \\
& & Case II & 0.973 & 0.066 & 0.711 & 0.975 & 0.007 & 0.207 & 0.955 & 0.000 & 0.002 \\
& MA & Case I & 0.978 & 0.054 & 0.667 & 0.970 & 0.005 & 0.170 & 0.965 & 0.000 & 0.004 \\
& & Case II & 0.943 & 0.051 & 0.657 & 0.953 & 0.006 & 0.207 & 0.885 & 0.000 & 0.005 \\
\midrule
\multirow{6}{*}{$0.04$}
& BM & Case I & 0.984 & 0.083 & 0.628 & 0.965 & 0.005 & 0.090 & 0.923 & 0.000 & 0.003 \\
& & Case II & 0.915 & 0.078 & 0.617 & 0.866 & 0.005 & 0.095 & 0.786 & 0.000 & 0.004 \\
& AR & Case I & 0.996 & 0.075 & 0.590 & 0.984 & 0.005 & 0.093 & 0.988 & 0.000 & 0.000 \\
& & Case II & 0.974 & 0.068 & 0.567 & 0.965 & 0.004 & 0.079 & 0.956 & 0.000 & 0.000 \\
& MA & Case I & 0.980 & 0.059 & 0.547 & 0.974 & 0.006 & 0.104 & 0.953 & 0.000 & 0.002 \\
& & Case II & 0.925 & 0.056 & 0.538 & 0.936 & 0.004 & 0.085 & 0.891 & 0.000 & 0.001 \\
\midrule
\multirow{6}{*}{$0.10$}
& BM & Case I & 0.977 & 0.072 & 0.388 & 0.945 & 0.002 & 0.016 & 0.887 & 0.000 & 0.001 \\
& & Case II & 0.936 & 0.066 & 0.372 & 0.832 & 0.002 & 0.016 & 0.685 & 0.000 & 0.000 \\
& AR & Case I & 0.992 & 0.060 & 0.338 & 0.985 & 0.002 & 0.013 & 0.982 & 0.000 & 0.000 \\
& & Case II & 0.972 & 0.058 & 0.331 & 0.957 & 0.002 & 0.016 & 0.937 & 0.000 & 0.000 \\
& MA & Case I & 0.980 & 0.055 & 0.318 & 0.969 & 0.002 & 0.016 & 0.915 & 0.000 & 0.000 \\
& & Case II & 0.940 & 0.048 & 0.300 & 0.915 & 0.002 & 0.015 & 0.858 & 0.000 & 0.000 \\
\midrule
\multirow{6}{*}{$0.20$}
& BM & Case I & 0.973 & 0.033 & 0.117 & 0.883 & 0.002 & 0.008 & 0.520 & 0.000 & 0.002 \\
& & Case II & 0.903 & 0.025 & 0.096 & 0.764 & 0.000 & 0.001 & 0.480 & 0.000 & 0.000 \\
& AR & Case I & 0.993 & 0.032 & 0.112 & 0.990 & 0.005 & 0.020 & 0.976 & 0.002 & 0.010 \\
& & Case II & 0.980 & 0.029 & 0.103 & 0.966 & 0.003 & 0.012 & 0.922 & 0.000 & 0.001 \\
& MA & Case I & 0.977 & 0.026 & 0.094 & 0.970 & 0.003 & 0.012 & 0.596 & 0.000 & 0.001 \\
& & Case II & 0.934 & 0.025 & 0.093 & 0.913 & 0.002 & 0.007 & 0.727 & 0.000 & 0.000 \\
\bottomrule
\end{tabular}
}
\end{table}

\section{Real data application} \label{sec5}
Financial markets are continually influenced by information arrivals\citep{andersen2003micro}, liquidity conditions\citep{amihud2002illiquidity}, and investor sentiment \citep{baker2006investor}.
Their interaction at particular times can trigger abrupt intraday price jumps or sharp increases in volatility
\citep{lee2008jumps, liu2015does}.
Such abnormal fluctuations can materially impair asset valuation, portfolio construction, and the effectiveness of trading and hedging decisions, thereby posing substantial challenges for financial risk assessment and management.
\citep{mackinlay1997event,geboers2023review}.
Accordingly, the reliable identification of abnormal trading days is
therefore important for both market surveillance and financial risk
management.

Motivated by this consideration, we apply the proposed outlier detection procedure to
high-frequency exchange-traded fund (ETF) data. The detected anomalies are then used to construct a simple downside-risk
control strategy as an additional assessment of their practical relevance. We
consider four representative ETFs from the Resset database: the SSE 50 ETF and
CSI 300 ETF, which represent large-capitalization equities and the broader
equity market in China, respectively, and the Nasdaq 100 ETF and S\&P 500 ETF,
which provide exposure to the U.S. technology sector and the broader U.S.
equity market. For each ETF, the dataset contains five-minute high and low
prices together with daily opening and closing prices.
ETF high-frequency data are naturally suited to interval-valued functional
modeling. Within each five-minute trading interval, both the lowest and
highest transaction prices are directly observed, so the intraday price record
contains an interval rather than a single representative value. Conventional
point-valued analyses typically retain only the closing price, midpoint, or
one of the interval boundaries. Although such representations preserve the
evolution of the price level, they discard the within-interval trading range.
By jointly using the high and low prices, the interval-valued representation
retains information about both intraday price movements and local price
dispersion, allowing an abnormal trading day to be characterized through
either component or through their joint evolution.
Let $P_t^{\ell}(v)$ and $P_t^{u}(v)$ denote the intraday low- and high-price
functions on trading day $t$, where $v\in[0,1]$ is standardized trading time,
and let $P_t^{o}$ denote the daily opening price. After removing trading days
with excessive missing records and interpolating the remaining missing values
onto a common intraday grid, we normalize the two price functions by the
opening price and define
\[
    X_t(v)
    =
    \left[
    X_t^{\ell}(v),X_t^{u}(v)
    \right]
    =
    \left[
    \frac{P_t^{\ell}(v)}{P_t^{o}},
    \frac{P_t^{u}(v)}{P_t^{o}}
    \right],
    \qquad v\in[0,1].
\]
This normalization removes overnight price gaps and reduces scale differences
across trading days and ETFs. 

To ensure that the empirical analysis uses only historically observable
information, ILTFS-FDR is implemented through a rolling window. At the end of
day $t-1$, define the reference sample
$
    \mathcal{X}_{t-1}^{(W)}
    =
    \{X_s:s=t-W,\ldots,t-1\},
$
where $W$ is the window length. Applying ILTFS-FDR to
$\mathcal{X}_{t-1}^{(W)}$ yields an estimated set of abnormal trading-day indices,
$
    \widehat{\mathcal{O}}_{t-1}^{(W)}
    \subseteq
    \{t-W,\ldots,t-1\}.
$
Because positive and negative anomalies have different implications for risk
management, we combine the outlier classification with the sign of the most
recent daily return. Let
\[
r_{t-1}
=
\frac{P_{t-1}^{c}-P_{t-1}^{o}}{P_{t-1}^{o}},
\qquad
S_{t-1}
=
\begin{cases}
1, & t-1\in\widehat{\mathcal{O}}_{t-1}^{(W)}
     \ \text{and}\ r_{t-1}<0,\\
0, & \text{otherwise},
\end{cases}
\]
where $r_{t-1}$ denotes the open-to-close return on day $t-1$,
$P_{t-1}^{c}$ is the closing price, and $S_{t-1}$ is the resulting
downside-risk alert.
Thus, an ILTFS-FDR outlier generates a trading alert only when it is
accompanied by a negative return. When $S_{t-1}=1$, the ETF position is closed
at the market open on day $t$; otherwise, the strategy remains invested. The
resulting position indicator is $a_t=1-S_{t-1}$, and the gross strategy return
is
$
    r_t^{\mathrm{i}}=a_t r_t.
$
The position is re-established at the first subsequent market open for which
the lagged alert equals zero. This construction treats ILTFS-FDR as a defensive
exposure-timing signal rather than as an asset-selection rule and avoids the
use of future information.


Table~\ref{tab:real_detection} summarizes the resulting downside-risk alerts.
ILTFS-FDR produces between 10 and 24 alerts for each ETF, corresponding to only
0.85\%--2.03\% of the evaluation sample. These alerts form 8--20 distinct
episodes, and the longest episode lasts only two or three consecutive trading
days. The procedure therefore identifies a sparse set of short-lived abnormal
periods rather than persistently classifying large portions of the sample as
risky.
\begin{table}[htbp]
\centering
\caption{Downside-risk alerts in the ETF application}
\label{tab:real_detection}
\resizebox{\textwidth}{!}{
\begin{tabular}{lrrrrrr}
\toprule
ETF
& trading days
& alerts
& alert rate
& episodes
& absolute-return ratio
& bottom-decile share\\
\midrule
SSE 50 ETF     & 1,180 & 24 & 2.03 & 20 & 2.20 & 41.7 \\
CSI 300 ETF    & 1,180 & 18 & 1.53 & 15 & 2.03 & 44.4 \\
Nasdaq 100 ETF & 1,179 & 10 & 0.85 &  8 & 3.20 & 50.0 \\
S\&P 500 ETF   & 1,177 & 16 & 1.36 & 15 & 1.65 & 37.5 \\
\bottomrule
\end{tabular}
}
\begin{minipage}{\textwidth}\footnotesize
\textit{Note:} An episode is a maximal sequence of consecutive alert days.
The absolute-return ratio is the mean absolute daily return on alert days
divided by that on non-alert days. The bottom-decile share is the proportion
of alert days whose returns fall in the lowest decile of the corresponding
ETF's unconditional daily-return distribution; the unconditional benchmark is
10\%.
\end{minipage}
\end{table}
The alerts are economically distinct from ordinary trading days. Mean absolute
returns on alert days are 1.65--3.20 times those on non-alert days. Moreover,
37.5\%--50.0\% of the alerts fall in the bottom decile of the corresponding
return distribution, compared with an unconditional benchmark of 10\%.
Downside-tail observations are therefore overrepresented among the alerts by
factors of 3.75--5.00. These results indicate that the procedure does not
merely detect statistically unusual curves; after incorporating the return
sign, the resulting alerts are strongly associated with economically adverse
price movements.
The cross-market comparison further shows that the detections contain both
asset-specific and common information. Among the 1,176 dates shared by all
four ETFs, 1,124 contain no alert, 38 contain an alert for exactly one ETF, 12
contain alerts for two ETFs, and only two contain alerts for three ETFs. No
date is simultaneously flagged for all four series. Most alerts are therefore
series-specific, whereas a smaller subset reflects disturbances shared across
related markets. In particular, three ETFs are simultaneously flagged on
March 19, 2020, and August 5, 2024.

Figure~\ref{fig:etf_strategy_performance} presents the out-of-sample cumulative
net asset values of the downside-risk-control strategy and the continuously
invested benchmark. The strategy achieves a higher terminal cumulative return
for all four ETFs. For the SSE 50 ETF and CSI 300 ETF, the terminal returns
increase from 16.03\% to 25.48\% and from 18.30\% to 27.58\%, respectively.
For the Nasdaq 100 ETF and S\&P 500 ETF, they increase from 285.26\% to
308.81\% and from 156.89\% to 187.77\%, respectively. Because alerts account
for no more than 2.03\% of the sample, these improvements are not obtained by
remaining outside the market for extended periods. Instead, they suggest that
a small number of abnormal negative-return days make a disproportionate
contribution to adverse investment performance.

\begin{figure}[htbp]
\centering
\includegraphics[width=\textwidth]{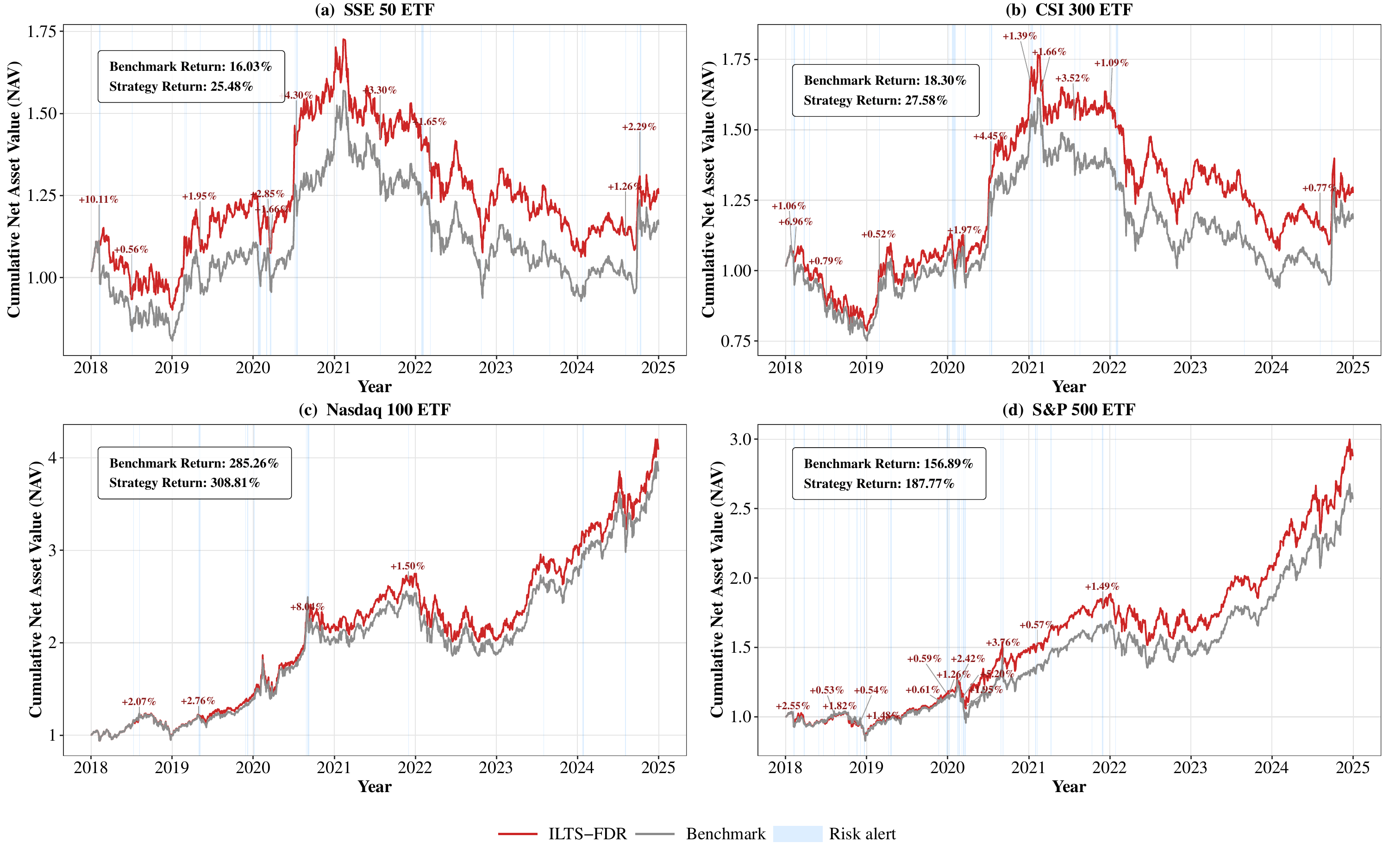}
\caption{Out-of-sample cumulative net asset values of the ILTFS-FDR
downside-risk-control strategy and the continuously invested benchmark. The
panels correspond to the SSE 50 ETF, CSI 300 ETF, Nasdaq 100 ETF, and
S\&P 500 ETF, respectively. Highlighted dates indicate downside-risk alerts
constructed from lagged information.}
\label{fig:etf_strategy_performance}
\end{figure}

Overall, the empirical evidence supports the practical relevance of ILTFS-FDR
for abnormal trading-day detection. Statistically, the method identifies a
small set of interval-valued functional observations associated with unusually
large price movements and a pronounced concentration in the lower return
tail. Economically, converting these detections into a lagged defensive signal
improves terminal out-of-sample valued across all four ETFs. The results
therefore show that the proposed interval-valued functional framework can
provide useful information for both abnormal-market monitoring and dynamic
downside-risk control.
\section{Conclusions} \label{sec6}
This paper develops a robust projection-based framework for outlier detection
in interval-valued functional data. The proposed framework first represents
each interval-valued functional observation through its center and log-radius
functions and applies IFPCA to obtain a joint low-dimensional representation.
The ILTFS method is then introduced to identify a robust reference subset by
minimizing a trimmed aggregate of standardized IFPCA score distances. After
refining this subset, the projection structure is re-estimated, and the
resulting distances are converted into empirical $p$-values and adjusted using
the Benjamini--Hochberg procedure at a prespecified target false discovery
rate level. We further derive the finite-sample replacement breakdown point of
the ILTFS mean estimator and establish the descent property of the associated
concentration-step algorithm.

Simulation studies show that the proposed framework can effectively detect
both shape and magnitude outliers under different dependence structures,
sample sizes, grid resolutions, and contamination settings. Its performance
generally becomes more stable with larger samples and denser observation
grids, although extremely sparse contamination and more complex dependence
structures remain challenging. The application to high-frequency ETF data
identifies a small number of  abnormal trading periods that are
associated with unusually large price movements and a greater concentration
in the lower tail of the return distribution. The resulting lagged
downside-risk signals also improve terminal cumulative returns relative to
the continuously invested benchmarks for the four ETFs considered.

\section*{Declaration of competing interest}
The authors report there are no competing interests to declare.

\section*{Acknowledgments}
\addcontentsline{toc}{section}{Acknowledgments}
The research work described in this paper was supported by the National Natural Science Foundation of China (Nos. 72471010, W2511079 and 72071008).

\appendix
\section{Proofs for results}\label{appendixA}

\begin{theorem}[Finite-Sample Breakdown Point]
The finite-sample breakdown point of the proposed estimator is $\epsilon_N = \min(N - h + 1, h) / N$.
\end{theorem}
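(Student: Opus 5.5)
The argument has two parts. First, a lower bound: with $m<\min(N-h+1,h)$ replaced observations, the ILTFS mean stays bounded along every direction $\widehat{\boldsymbol{\phi}}^{*}_k$, $k\le d$. Second, an upper bound: a configuration with $m=\min(N-h+1,h)$ replaced observations drives it to infinity. Throughout, the initial eigenvalues $\widehat{\nu}^{*}_1\ge\cdots\ge\widehat{\nu}^{*}_d>0$ and eigenfunctions are held fixed, as in the ILTFS criterion \eqref{eq:iltfs_objective}. We take the breakdown criterion to be the projection seminorm $\|\mathbf f\|_{\boldsymbol\psi_k}=|\inner{\mathbf f}{\widehat{\boldsymbol\phi}^*_k}|$.

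\emph{Lower bound.} Suppose $m\le N-h$. Then at least $h$ original observations survive in $\mathcal X'$, so we can fix a clean subset $H_c$ of size $h$ made only of original curves. Its objective value satisfies $Q(H_c)\le\sum_{i\in H_c}D_i(H_c)=:B$, and $B$ depends only on the original sample, not on the outliers. Now let $\widehat H$ be the minimizer and set $a_k=\inner{\widehat{\boldsymbol\mu}_{\widehat H}}{\widehat{\boldsymbol\phi}^*_k}$.

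The key step is to show that $\max_k|a_k|$ is bounded by a constant depending only on $B$ and on the original clean data. Since $m<h$, at least one of the $h$ smallest distances $D_{(j)}(\widehat H)$ belongs to an original observation $i_0$. This gives
\[
\frac{\bigl(\inner{\mathbf X_{i_0}}{\widehat{\boldsymbol\phi}^*_k}-a_k\bigr)^2}{\widehat\nu^*_k}\le D_{i_0}(\widehat H)\le Q(\widehat H)\le B,
\]
and hence
\[
|a_k|\le \max_{i\ \text{original}}\bigl|\inner{\mathbf X_i}{\widehat{\boldsymbol\phi}^*_k}\bigr|+\sqrt{\widehat\nu^*_1 B}.
\]
This bound is uniform over all $\mathcal X'$, so no breakdown occurs. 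Combining the two requirements $m\le N-h$ and $m\le h-1$ gives $m<\min(N-h+1,h)$, so the breakdown point is at least $\min(N-h+1,h)/N$.

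One subtlety is that the IMDP-based eigenstructure is itself computed from contaminated data. I would treat the directions and standardization factors as given with $\widehat\nu^*_k$ bounded away from zero, or else invoke the robustness of IMDP from \citet{TianLiPeng2026}. The bound $B$ also involves $\widehat\nu^*_d$, and I would flag this as an implicit assumption.

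\emph{Upper bound.} First take $m=h$ with $h\le N-h+1$. Replace $h$ observations by one identical curve $\mathbf X_0+\lambda\widehat{\boldsymbol\phi}^*_1$. Choosing $H$ to be exactly these $h$ copies gives $Q(H)=0$, which is the global minimum. (If several subsets attain zero, the sup over $\mathcal X'$ still yields one that does.) The resulting mean has first projection of order $\lambda$, which tends to infinity as $\lambda\to\infty$.

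Next take $m=N-h+1\le h$. Any $h$-subset must then contain at least one replaced curve. Place the $m$ outliers at distinct points tending to infinity along $\widehat{\boldsymbol\phi}^*_1$, or at a single such point. Whichever $h$-subset is selected contains some outlier with coefficient $\lambda$ and at most $h-1$ bounded points, so $|a_1|\ge\lambda/h-C$, which again tends to infinity.

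The main obstacle is making this last case watertight. I would treat the $m$ outliers as identical copies placed at $\lambda\widehat{\boldsymbol\phi}^*_1$; then every candidate mean has $a_1\ge\lambda/h-C$ regardless of which subset is chosen, and the argument goes through without needing to control which subset attains the minimum.
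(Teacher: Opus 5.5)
Your argument is correct. Its skeleton matches the paper's: a clean $h$-subset gives a finite bound on the optimal criterion value, optimality then pins down the projections of the ILTFS mean, and explicit contamination gives breakdown at $\min(N-h+1,h)$. The key step of the lower bound, however, is done differently.

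The paper bounds the in-subset sum $\sum_{i\in H}D_i(H)$ and takes the clean part $J_0\subseteq H_{\mathrm{opt}}$. It then decomposes into $\sum_{i\in J_0}\inner{\bX_i-\hat{\bmu}_{J_0}}{\hat{\bpsi}_l}^2$ plus $|J_0|\inner{\hat{\bmu}_{J_0}-\hat{\bmu}_{H_{\mathrm{opt}}}}{\hat{\bpsi}_l}^2$, which gives a lower bound of order $M^2-2hMM_1^{1/2}$ and hence a contradiction. You instead work with $Q(H)=\sum_{j\le h}D_{(j)}(H)$ exactly as defined in \eqref{eq:iltfs_objective}. You apply pigeonhole to the $h$ nearest observations, which need not lie in $\widehat H$, and bound the single term $D_{i_0}(\widehat H)\le Q(\widehat H)\le B$. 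This gives an explicit bound on $|a_k|$ rather than a contradiction.

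Your route buys two things:
\begin{enumerate}
\item You never have to relate $Q(H)$ to $\sum_{i\in H}D_i(H)$. The paper's chain only shows $\sum_{i\in H_{\mathrm{opt}}}D_i(H_{\mathrm{opt}})>\sum_{i\in H_1}D_i(H_1)$, and since both sides dominate the corresponding $Q$ values, this does not by itself contradict minimality of $Q$.
\item By keeping $(\widehat\nu^*_k,\widehat{\boldsymbol\phi}^*_k)$ fixed, you avoid the paper's subset-indexed $\hat\lambda_{k,H_1}$ and $\hat\lambda_{l,\mathrm{opt}}$. Read literally, these would undermine the comparison: an eigenvalue re-estimated from a subset containing a far outlier grows like $M^2$ and cancels the $M^2$ in the lower bound.
\end{enumerate}
Your upper-bound constructions are also more explicit than the paper's sketch: $h$ identical copies give $Q=0$, and for $m=N-h+1$ every $h$-subset contains a copy at $\lambda\widehat{\boldsymbol\phi}^*_1$.

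Two small refinements.
\begin{enumerate}
\item Uniformity of your bound needs control of the ratio $\widehat\nu^*_1/\widehat\nu^*_d$, not just $\widehat\nu^*_d$ bounded below. Your bound involves $\sqrt{\widehat\nu^*_1B}$ with $B=O(1/\widehat\nu^*_d)$. The paper's ``$M$ sufficiently large'' hides the same dependence.
\item Your parenthetical on ties in the $m=h$ case is not the right justification, because the supremum over $\mathcal X'$ does not choose the minimizer. Argue instead that any $H$ with $Q(H)=0$ has mean projections equal to those of $h$ observations with identical projections. If no $h$ clean curves share the same $d$ projections, those observations can only be the copies once $\lambda$ is large, so every minimizer breaks down.
\end{enumerate}
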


\begin{proof}
We first show that if the number of outliers $m < \min(N - h + 1, h)$, the estimator remains bounded.
Since $m < N - h + 1$, there exists at least one subset $H_1 \subset \{1, \dots, N\}$ of size $h$ consisting entirely of original observations. Let $M_1=\max_{i=1,...,N}||X_i||_w^2<\infty$.
We explicitly derive the upper bound for the objective function of this clean subset $H_1$. By definition:
\begin{align*}
    \sum_{i \in H_1} D_i(H_1) &= \sum_{i \in H_1} \sum_{k=1}^d \hat{\lambda}_{k, H_1}^{-1} \inner{\bX_i - \hat{\bmu}_{H_1}}{\hat{\bpsi}_{k, H_1}}^2 \nonumber \\
    &\le \sum_{i \in H_1} \sum_{k=1}^d \hat{\lambda}_{k, H_1}^{-1} \norm{\bX_i - \hat{\bmu}_{H_1}}^2 \nonumber \\
    &\le \sum_{i \in H_1} \sum_{k=1}^d \hat{\lambda}_{k, H_1}^{-1} 2\left( \norm{\bX_i}^2 + \norm{\hat{\bmu}_{H_1}}^2 \right) \nonumber \\
    &\le 4h M_1 \sum_{k=1}^d \hat{\lambda}_{k, H_1}^{-1}
\end{align*}
This confirms that the minimum objective function value is bounded.

Suppose that there exists a principal component $l$ such that $|\langle\langle\hat{\mu}_{H_{opt}},\hat{\psi}_{l,opt}\rangle\rangle_w|=M$ for some large $M$. Let $H_{opt}$ be the optimal subset derived by the algorithm. Since $m < h$, $H_{opt}$ contains a non-empty subset $J_0$ of original observations. 
We analyze the objective function value for $H_{opt}$:
\begin{align*}
    \sum_{i \in H_{opt}} D_i(H_{opt}) &= \sum_{i \in H_{opt}} \sum_{k=1}^d \hat{\lambda}_{k, opt}^{-1} \inner{\bX_i - \hat{\bmu}_{H_{opt}}}{\hat{\bpsi}_{k, opt}}^2 \nonumber \\
    &\ge \sum_{k=1}^d \hat{\lambda}_{k, opt}^{-1} \sum_{i \in J_0} \inner{(\bX_i - \hat{\bmu}_{J_0}) + (\hat{\bmu}_{J_0} - \hat{\bmu}_{H_{opt}})}{\hat{\bpsi}_{k, opt}}^2 \nonumber \\
    &= \sum_{k=1}^d \hat{\lambda}_{k, opt}^{-1} \left[ \sum_{i \in J_0} \inner{\bX_i - \hat{\bmu}_{J_0}}{\hat{\bpsi}_{k, opt}}^2 + |J_0| \inner{\hat{\bmu}_{J_0} - \hat{\bmu}_{H_{opt}}}{\hat{\bpsi}_{k, opt}}^2 \right].
\end{align*}
Since all terms in the summation over $k$ are non-negative, the total sum is greater than or equal to any single term. Let $l \in \{1, \dots, d\}$ be such a direction where the projection captures the magnitude of the shift.
Dropping all terms where $k \neq l$ and ignoring the non-negative residual variance term, we obtain the lower bound:
\begin{align*}
\sum_{i \in H_{opt}} D_i(H_{opt}) &\ge |J_0| \hat{\lambda}_{l, opt}^{-1} \inner{\hat{\bmu}_{H_{opt}} - \hat{\bmu}_{J_0}}{\hat{\bpsi}_{l, opt}}^2\\
&\ge |J_0| \hat{\lambda}_{l, opt}^{-1} \left( \inner{\hat{\bmu}_{H_{opt}}}{\hat{\bpsi}_{l, opt}}^2 - 2 \inner{\hat{\bmu}_{H_{opt}}}{\hat{\bpsi}_{l, opt}} \inner{\hat{\bmu}_{J_0}}{\hat{\bpsi}_{l, opt}} \right) \nonumber \\
    &= \hat{\lambda}_{l, opt}^{-1} \left( |J_0| \inner{\hat{\bmu}_{H_{opt}}}{\hat{\bpsi}_{l, opt}}^2 - 2 |J_0| \inner{\hat{\bmu}_{H_{opt}}}{\hat{\bpsi}_{l, opt}} \inner{\hat{\bmu}_{J_0}}{\hat{\bpsi}_{l, opt}} \right)\\
    &\ge \hat{\lambda}_{l, opt}^{-1} \left( M^2 - 2h M M_1^{1/2} \right)
\end{align*}

This implies $\sum_{i \in H_{opt}} D_i(H_{opt}) > \sum_{i \in H_1} D_i(H_1)$ provided $M$ is sufficiently large, which contradicts the definition of $H_{opt}$. Thus, $\|\hat{\bmu}_{H_{opt}}\|_\mathbf{W}$ must be bounded.

We next to show that $\epsilon_N\le\min(N-h+1,h)/N$. If we replace $N-h+1$ data points of $\mathcal{X}$, then the optimal subset $H_{opt}$ of the contaminated dataset $\mathcal{X}'$ would contain at least one outlier. The least squares method breaks down even with one single outlier. It then follows that $||\hat{\mu}_{H_{opt}}||_w$ is not bounded because we can simply replace the observation $X_i(t)$ by $X_i'(t)=K\hat{\psi}_{l,opt}(t)$ so that $|\langle\langle{X_i',\hat{\psi}_{l,opt}}\rangle\rangle_w|=|K|$, which can be arbitrarily large. Similarly, we can easily observe that $\epsilon_N\le{h/N}$. This completes the proof.

\end{proof}

\begin{theorem}
Let $H_1\subset\{1,\dots,N\}$ be a subset of size $h$. Compute the distance $D_i(H_1)$ for $i=1,\dots,N$ based on the estimators from $H_1$. If we construct $H_2$ such that $\{D_i(H_1):i\in H_2\}=\{D_{(1)}(H_1),\dots,D_{(h)}(H_1)\}$, where $D_{(1)}(H_1)\le\dots\le D_{(h)}(H_1)$ are the ordered distances, and compute $D_i(H_2)$ based on $H_2$, then$
\sum_{i=1}^hD_{(i)}(H_2)\le\sum_{i=1}^hD_{(i)}(H_1),$
equality holds if and only if $H_1=H_2$.
\end{theorem}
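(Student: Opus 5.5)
The plan is to split the concentration step into two inequalities and chain them, using the fact that during the subset search the eigenpairs $\{(\widehat{\nu}_m^{*},\widehat{\boldsymbol{\phi}}_m^{*})\}_{m=1}^d$ are fixed. Then $D_i(H)$ depends on $H$ only through the subset mean $\widehat{\boldsymbol{\mu}}_H$. Define the quadratic form
\[
q(\mathbf{f})=\sum_{m=1}^d \frac{\inner{\mathbf{f}}{\widehat{\boldsymbol{\phi}}_m^{*}}^2}{\widehat{\nu}_m^{*}},\qquad \mathbf{f}\in\mathcal{H},
\]
so that $D_i(H)=q(\mathbf{X}_i-\widehat{\boldsymbol{\mu}}_H)$. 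The form $q$ is positive semidefinite, and its polarization $B(\mathbf{f},\mathbf{g})$ is a symmetric bilinear form on $\mathcal{H}$.

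\textbf{Step 1 (selection step).} By the construction of $H_2$,
\[
\sum_{i\in H_2} D_i(H_1)=\sum_{j=1}^h D_{(j)}(H_1),
\]
which is exactly the right-hand side of the claim. No property of $H_1$ beyond its induced mean is needed here.

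\textbf{Step 2 (re-centering step).} For any fixed $\mathbf{m}\in\mathcal{H}$, expand around $\widehat{\boldsymbol{\mu}}_{H_2}$ using bilinearity:
\[
\sum_{i\in H_2} q(\mathbf{X}_i-\mathbf{m})=\sum_{i\in H_2} q(\mathbf{X}_i-\widehat{\boldsymbol{\mu}}_{H_2})+h\,q(\widehat{\boldsymbol{\mu}}_{H_2}-\mathbf{m}).
\]
The cross term vanishes because $\sum_{i\in H_2}(\mathbf{X}_i-\widehat{\boldsymbol{\mu}}_{H_2})=\mathbf{0}$. Taking $\mathbf{m}=\widehat{\boldsymbol{\mu}}_{H_1}$ gives
\[
\sum_{i\in H_2}D_i(H_2)=\sum_{i\in H_2}D_i(H_1)-h\,q(\widehat{\boldsymbol{\mu}}_{H_2}-\widehat{\boldsymbol{\mu}}_{H_1})\le \sum_{i\in H_2}D_i(H_1).
\]

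\textbf{Step 3 (order statistics).} The sum of the $h$ smallest values of $D_i(H_2)$ over all $N$ indices is at most the sum over any particular $h$ indices, in particular over $H_2$. Hence
\[
\sum_{j=1}^h D_{(j)}(H_2)\le \sum_{i\in H_2}D_i(H_2).
\]
Chaining Steps 3, 2 and 1 yields the inequality.

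\textbf{Step 4 (the equality clause).} This is where I expect the real difficulty. The direction ``$H_1=H_2\Rightarrow$ equality'' is immediate, since both sides then coincide.

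For the converse, equality in the chain forces two things. First, $q(\widehat{\boldsymbol{\mu}}_{H_2}-\widehat{\boldsymbol{\mu}}_{H_1})=0$, meaning the two subset means have the same projections on all $d$ retained directions. It follows that $D_i(H_1)=D_i(H_2)$ for every $i$. Second, $H_2$ must realize the $h$ smallest of these common distances.

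This does not by itself give $H_1=H_2$. The counterexample comes from $q$ being only semidefinite combined with the objective being a sum of order statistics rather than $\sum_{i\in H_1}D_i(H_1)$. Suppose two size-$h$ subsets have equal projected means and $H_1$ is not among its own $h$ nearest points. Then equality can hold with $H_1\neq H_2$.

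I would therefore prove the converse under two explicit conventions. The first is a no-ties/genericity condition: distinct size-$h$ subsets have distinct projected means, and the distances $D_i(H)$ have no ties at the $h$th position. The second is that the comparison is made along the iteration, where $H_1$ is itself the output of a previous C-step, so that $H_1$ consists of its own $h$ smallest distances. Under these conditions, $q(\widehat{\boldsymbol{\mu}}_{H_2}-\widehat{\boldsymbol{\mu}}_{H_1})=0$ forces $\widehat{\boldsymbol{\mu}}_{H_1}$ and $\widehat{\boldsymbol{\mu}}_{H_2}$ to coincide in projection. With no ties, the sets of $h$ smallest distances then coincide, which gives $H_2=H_1$.

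If the stated ``if and only if'' is meant without such conditions, I would flag it as requiring this qualification.
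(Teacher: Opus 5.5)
Your Steps 1--3 are the paper's proof of the inequality. With the eigenpairs held fixed, the cross term vanishes because $\sum_{i\in H_2}(\mathbf{X}_i-\widehat{\boldsymbol{\mu}}_{H_2})=\mathbf{0}$, and chaining with the order-statistic bound and the selection identity gives the result.

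On the equality clause your diagnosis is correct, and sharper than the paper's. The paper never reaches $H_1=H_2$; it only asserts that equality holds iff $\widehat{\boldsymbol{\mu}}_{H_2}=\widehat{\boldsymbol{\mu}}_{H_1}$. The chain actually shows that equality holds iff $q(\widehat{\boldsymbol{\mu}}_{H_2}-\widehat{\boldsymbol{\mu}}_{H_1})=0$, i.e.\ iff the two means have the same projections on the $d$ retained directions. In that case $D_i(H_1)=D_i(H_2)$ for all $i$ and equality is automatic.

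An explicit counterexample to the stated converse:
\begin{itemize}
\item take $d=1$, $N=5$, $h=3$, $\mathbf{X}_i=x_i\widehat{\boldsymbol{\phi}}_1^{*}$ with $(x_1,\dots,x_5)=(-3,-1,0,1,3)$, and $H_1=\{1,3,5\}$;
\item then $\widehat{\boldsymbol{\mu}}_{H_1}=\mathbf{0}$, so $H_2=\{2,3,4\}$ and $\widehat{\boldsymbol{\mu}}_{H_2}=\mathbf{0}$;
\item both sides equal $2/\widehat{\nu}_1^{*}$, although $H_1\neq H_2$.
\end{itemize}
Here even the full means coincide, so semidefiniteness of $q$ is not what breaks the converse. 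What breaks it is the order-statistic objective combined with distinct subsets sharing a mean.

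The weak point is your rescue in Step 4. Your second convention is misjustified. If $H_1$ is produced by a C-step from some $H_0$, it consists of the $h$ smallest values of $D_i(H_0)$, not of $D_i(H_1)$. Being one's own $h$ nearest set is the fixed-point property, not something every iterate has. Moreover, if that property did hold with no tie at position $h$, then $H_2=H_1$ directly by construction. The converse would then be vacuous, and the equality hypothesis would never be used.

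You should drop it, because your first convention alone closes the argument. Equality forces $q(\widehat{\boldsymbol{\mu}}_{H_2}-\widehat{\boldsymbol{\mu}}_{H_1})=0$, hence equal projected means. The assumption that distinct size-$h$ subsets have distinct projected means then gives $H_1=H_2$. The no-ties assumption is not needed either.
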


\begin{proof}
First, consider the objective function for the new subset $H_2$. By the definition of order statistics (the sum of the smallest $h$ values is always less than or equal to the sum of any $h$ values), we have:
\begin{align*}\label{eq:ordering}
\sum_{i=1}^hD_{(i)}(H_2)\le\sum_{i\in H_2}D_i(H_2).
\end{align*}

Next, we analyze the relationship between the distances computed with updated parameters (from $H_2$) versus old parameters (from $H_1$). To focus on the location update, we fix the covariance components $\hat{\lambda}_{k,H_1}$ and $\hat{\bpsi}_{k,H_1}$. By expanding the squared distance $D_i(H_1)$ around the new mean $\hat{\bmu}_{H_2}$, we have:
\begin{align*}
\sum_{i\in H_2}D_i(H_1)&=\sum_{i\in H_2}\sum_{k=1}^d\frac{1}{\hat{\lambda}_{k}}{\inner{(\bX_i-\hat{\bmu}_{H_2})+(\hat{\bmu}_{H_2}-\hat{\bmu}_{H_1})}{\hat{\bpsi}_{k}}}^2\nonumber\\
&=\sum_{i\in H_2}D_i(H_2)\nonumber\\
&\quad+2\sum_{k=1}^d\frac{1}{\hat{\lambda}_{k}}\inner{\sum_{i\in H_2}(\bX_i-\hat{\bmu}_{H_2})}{\hat{\bpsi}_{k}}\inner{\hat{\bmu}_{H_2}-\hat{\bmu}_{H_1}}{\hat{\bpsi}_{k}}\nonumber\\
&\quad+h\sum_{k=1}^d\frac{1}{\hat{\lambda}_{k}}{\inner{\hat{\bmu}_{H_2}-\hat{\bmu}_{H_1}}{\hat{\bpsi}_{k}}}^2.
\end{align*}

Since $\hat{\bmu}_{H_2}=h^{-1}\sum_{i\in H_2}\bX_i$, the cross term vanishes because $\sum_{i\in H_2}(\bX_i-\hat{\bmu}_{H_2})=\mathbf{0}$. Therefore:
\begin{equation}\label{eq:ls_property}
\sum_{i\in H_2}D_i(H_1)=\sum_{i\in H_2}D_i(H_2)+h\sum_{k=1}^d\frac{1}{\hat{\lambda}_{k,H_1}}{\inner{\hat{\bmu}_{H_2}-\hat{\bmu}_{H_1}}{\hat{\bpsi}_{k,H_1}}}^2\ge\sum_{i\in H_2}D_i(H_2).
\end{equation}

Finally, by the construction of $H_2$, the set $\{D_i(H_1):i\in H_2\}$ corresponds exactly to the $h$ smallest distances based on $H_1$. Thus:
\begin{equation}\label{eq:definition}
\sum_{i\in H_2}D_i(H_1)=\sum_{i=1}^hD_{(i)}(H_1).
\end{equation}

Combining (\ref{eq:ls_property}), and (\ref{eq:definition}), we obtain the descent property:
\[
\sum_{i=1}^hD_{(i)}(H_2)\le\sum_{i\in H_2}D_i(H_2)\le\sum_{i\in H_2}D_i(H_1)=\sum_{i=1}^hD_{(i)}(H_1).
\]
The equality holds if and only if $\hat{\bmu}_{H_2}=\hat{\bmu}_{H_1}$ (and covariance estimates stabilize), which implies convergence.
\end{proof}

\bibliographystyle{cas-model2-names}
\bibliography{bib}

\end{document}